\newtheorem{precor}{{\bf Corollary}}
\newenvironment{cor}{\begin{precor}{ \hspace{-0.5
               em}{\bf.\ }}}{\end{precor}}
\newtheorem{precon}{{\bf Conjecture}}
\newtheorem{predefin}{{\bf Definition}}
\newenvironment{defin}[1]{\begin{predefin}\rm{ \hspace{-0.5
                   em}{\bf.\ }}{\rm
#1}\hfill{$\spadesuit$}}{\end{predefin}}
\newtheorem{preexm}{{\bf Example}}
\newenvironment{exm}[1]{\begin{preexm}\rm{ \hspace{-0.5
                  em}{\bf.\ }}{\rm #1}\hfill{$\clubsuit$}}{\end{preexm}}
\newtheorem{preappl}{{\bf Application}}
\newtheorem{prelem}{{\bf Lemma}}
\newenvironment{lem}{\begin{prelem}{ \hspace{-0.5
               em}{\bf.\ }}}{\end{prelem}}
\newtheorem{preproof}{{\bf Proof.\ }}
\newenvironment{proof}[1]{\begin{preproof}{\rm
               #1}\hfill{$\blacksquare$}}{\end{preproof}}
\newtheorem{presproof}{{\bf Sketch of Proof.\ }}
\newenvironment{pproof}[2]{{\bf #1.} {\rm
               #2}\hfill{$\blacksquare$}}{\par}
\newtheorem{prethm}{{\bf Theorem}}
\newenvironment{thm}{\begin{prethm}{ \hspace{-0.5
               em}{\bf.\ }}}{\end{prethm}}
\newtheorem{prealphthm}{{\bf Theorem}}
\newenvironment{alphthm}{\begin{prealphthm}\rm{ \hspace{-0.5
               em}{\bf.\ }}}{\end{prealphthm}}
\newtheorem{preprop}{{\bf Proposition}}
\newenvironment{prop}{\begin{preprop}\rm{ \hspace{-0.5
               em}{\bf.\ }}}{\end{preprop}}
\newenvironment{prob}[3]{\vskip 5pt
\noindent {\bf #1}\\[2pt]
\hspace*{-6pt}\begin{tabular}{p{50pt}l}
{INSTANCE:}& \parbox[t]{11.8cm}{#2}\\[10pt]
 QUERY:    & \parbox[t]{11.8cm}{#3}}
{\end{tabular}\\[8pt]}
\newenvironment{probc}[4]{\vskip 5pt
\noindent {\bf #1}\\[2pt]
\hspace*{-6pt}\begin{tabular}{p{58pt}l}
CONSTANTS: & \parbox[t]{11.5cm}{#2}\\
{INSTANCE:}& \parbox[t]{11.5cm}{#3}\\
 QUERY:    & \parbox[t]{11.5cm}{#4}}{\end{tabular}\\[8pt]}
\def\isdef{\mbox {$\ \stackrel{\rm def}{=} \ $}}
\begin{document}
%{\footnotesize ? {\bf ?} (200?) ?--?}\\
\def\thefootnote{\fnsymbol{footnote}}
\begin{center}
{\Large \bf  On Complexity of Isoperimetric Problems on Trees
%\footnote{A preliminary version of this article is available at arXiv.org}
}\\
\vspace*{0.5cm}
{\bf Amir Daneshgar\footnote{Correspondence should be addressed to {\tt
daneshgar@sharif.ac.ir}.}}\\
{\it Department of Mathematical Sciences} \\
{\it Sharif University of Technology} \\
{\it P.O. Box {\rm 11155--9415}, Tehran, Iran}\\
{\tt daneshgar@sharif.ir}\\ \ \\
{\bf Ramin Javadi}\\
{\it Department of Mathematical Sciences} \\
{\it Sharif University of Technology} \\
{\it P.O. Box {\rm 11155--9415}, Tehran, Iran}\\
{\tt rjavadi@mehr.sharif.ir}\\ \ \\
\end{center}
\begin{abstract}
This paper is aimed to investigate some computational aspects of different isoperimetric problems on weighted trees. 
In this regard, we consider different connectivity parameters called {\it minimum normalized cuts}/{\it isoperimteric numbers} defined through taking minimum of the maximum or the mean of the normalized outgoing  flows from a set of subdomains of vertices, where these subdomains constitute a {\it partition}/{\it subpartition}.
%Following the main result of [A.~Daneshgar, {\it et.~al.}, {\it On the isoperimetric spectrum of graphs and its approximations}, JCTB, (2010)], it is known that the isoperimetric number and the minimum normalized cut  both can be described as $\{0,1\}$-optimization programs, where the latter one does {\it not} admit a relaxation to the reals.
We show that the
decision problem for the case of taking $k$-partitions and the maximum
 (called the max normalized cut problem {\rm NCP}$^M$) as well as the other two decision problems for the mean version (referred to as {\rm IPP}$^m$ and {\rm NCP}$^m$) are $NP$-complete problems. On the other hand,  we show that the decision problem for the case of taking $k$-subpartitions and the maximum (called the max isoperimetric problem {\rm IPP}$^M$)
 can be solved in  {\it linear time} for any weighted tree  and any $k \geq 2$.
  Based on this fact, we provide polynomial time $O(k)$-approximation algorithms for
all different versions of $k$th isoperimetric numbers considered.\\
Moreover, when the number of partitions/subpartitions, $k$, is a fixed constant, as an extension of a result of B.~Mohar (1989) for the case $k=2$ (usually referred to as the Cheeger constant), we prove that
max and mean isoperimetric numbers of weighted trees as well as their max normalized cut can be computed in polynomial time.  We also prove some hardness results for the case of simple unweighted graphs and trees.
\\
\begin{itemize}
\item[]{{\footnotesize {\bf Key words:}\ isoperimetric number, Cheeger constant, normalized cut, graph partitioning, computational complexity, approximation algorithms, weighted trees.}}
\item[]{ {\footnotesize {\bf Subject classification:} 05C85, 68Q25, 68R10.}}
\end{itemize}
\end{abstract}
\def\thefootnote{\arabic{footnote}}
\setcounter{footnote}{0}
\section{Introduction}
The classical isoperimetric problem is a well-known and well-studied subject in Riemannian geometry, while the analogous problems in discrete case have recently been at the center of attention. Different aspects of these problems have been extensively studied in the literature and variety of relations to many important concepts have been discovered. The significance of the
isoperimetric problem is due to its relation to the central theoretical concepts and also its varied real world applications (e.g. see \cite{BAN80,CH97,LUB94,HAR04,LE91,MOH89,SM00,SOSA03} for motivations and the background).

Isoperimetric numbers can be considered as geometric tools to measure the connectivity of graphs. To begin, let us recall (e.g. see \cite{MOH89}) the definition of the
classical isoperimetric number (Cheeger constant) of a simple graph $G=(V,E)$ as
$$h(G) \isdef \displaystyle{\min_{_{|Q| \leq \frac{|V|}{2}}}}\ \frac{c(Q)}{|Q|}=\displaystyle{\min_{_{Q \subseteq V}}}\ \max \left \{\frac{c(Q)}{|Q|},\frac{c(Q)}{|Q^c|} \right \},$$
where
$$c(Q) \isdef | \{uv\in E: u \in Q\ \& \ v \not \in Q\}|,$$
and the not so common mean version as follows
\begin{equation}
\iota(G) \isdef \displaystyle{\min_{_{Q \subseteq V}}}\
\frac{c(Q)}{|Q||Q^c|}.
\end{equation}
Higher isoperimetric numbers, as  generalizations of the classical isoperimetric numbers, have been defined for a general Markov chain on a directed
base-graph and their properties has been studied extensively, (e.g. see \cite{DHJ10} and references therein).
These problems deal with minimizing the max/mean of the  normalized outgoing flows over all {\it subpartitions} (disjoint nonempty subsets) of the vertex set. One may also define similar parameters based on minimizing this value over all {\it partitions} of the vertex set usually known as the {\it minimum normalized cuts} (see e.g. \cite{SM00,SOSA03}).
Following the main result of \cite{DHJ10}, it is known that the isoperimetric numbers can be described as $\{0,1\}$-optimization programs which admit a relaxation to the reals, while this is not the case for the minimum normalized cuts. This fact can be considered as a clue that the normalized cut problem is likely to be harder than the isoperimetric problem, which is almost approved by the results of this article.

In this article we are going to investigate computational aspects of these parameters on weighted trees. Our motivations for this study are twofold. On the one hand, tree partitioning and in particular solving isoperimetric problems on weighted trees has its own importance due to the existence of many applications in the practical problems such as image segmentation and pattern recognition (e.g. see \cite{GR06,FTT04,LAMS09,BEPE95,CHSW07}). On the other hand, the study of isoperimetric problems on trees is important from a computational point of view, since they provide a universe in which by small perturbations of conditions, these problems change their computational hardness from simple (i.e. polynomial time) to hard (i.e. $NP$-complete)
and vise versa. In this regard, our results provide evidence to consider as a general belief that changing the problem from subpartitions to partitions or taking the mean instead of the maximum, usually makes the problem computationally harder.

Let us begin with a description of our general setup. Our framework is a {\it weighted graph} which is a simple graph $G=(V,E)$  along with two weight functions on the vertex and the edge sets as, $\omega:V\to {\mathbb Q}^+$ and $c:E\to {\mathbb Q}^+$, which is usually denoted by $G=(V,E,\omega,c)$. By an {\it unweighted graph} we mean a weighted graph where all the vertex and edge weights are equal to 1. For every nonvoid subsets $A,B\subseteq V$, we define
\[E(A,B)\isdef\{e=uv\in E:\ u\in A, v\in B\},\]
\[\omega(A)\isdef \sum_{u\in A} \omega(u),\quad c(A,B)\isdef \sum_{
e\in E(A,B)} c(e), \quad c(A)\isdef c(A,A^c). \]
The {\em normalized outgoing flow} of the set $A$ is defined as the quotient $c(A)/\omega(A)$.
The set ${\mathcal D}_{k}(V)$ is defined to be the set of all {\it $k$-subpartitions} $\{A_{{1}},\ldots,A_{{k}}\} \isdef \{A_{{i}}\}^k_1$ of $V$, where $A_{{i}}$'s are nonempty disjoint subsets of $V$.
The set of all {\it $k$-partitions} of a set $V$, which is denoted by ${\mathcal
P}_{{k}}(V)$, is the subclass of ${\mathcal D}_{{k}}(V)$ containing
all $k$-sets $\{A_{{i}}\}^k_1$ for which
$\cup_{{i=1}}^{{k}} A_{{i}}=V$. Also, for every positive integer $n$, the notation $[n]$ stands for the set $\{1,\ldots,n\}$.\\

Now, we define the mean and max isoperimetric numbers as well as the minimum normalized cuts as follows.
\begin{defin}{\label{DEFISO}
Given a weighted graph $G=(V,E,\omega,c)$, for each $k$, $1\leq k\leq |V|$,
the $k$th mean and max {\it isoperimetric numbers} of $G$, denoted by $\iota^{m}_{_{k}}(G)$ and $\iota^{M}_{_{k}}(G)$, respectively, are defined as
\begin{eqnarray*}
\iota^{m}_{_{k}}(G) &\isdef&
\displaystyle{\min_{{ \{A_{{i}}\}^{{k}}_{{1}} \in  {\mathcal
D}_{{k}}(V) }} } \ \frac{1}{k}  \left(
\displaystyle{\sum^{k}_{i=1}} \frac{
c(A_{{i}})}{\omega(A_{{i}})} \right),\\[2mm]
\iota^{M}_{_{k}}(G) &\isdef&
\displaystyle{\min_{{ \{A_{{i}}\}^{{k}}_{{1}} \in  {\mathcal
D}_{{k}}(V) }} } \ \max_{1\leq i\leq k}\
\frac{ c(A_{{i}})}{\omega(A_{{i}})}.
\end{eqnarray*}
Also, considering the partitions, we define the following related constants as the $k$th (mean and max) {\it minimum normalized cuts} of $G$,
\begin{eqnarray*}
\tilde{\iota}^{m}_{_{k}}(G) &\isdef& \displaystyle{\min_{{
\{A_{{i}}\}^{{k}}_{{1}} \in  {\mathcal P}_{{k}}(V) }} } \
\frac{1}{k} \left ( \displaystyle{\sum^{k}_{i=1}} \frac{
c(A_{{i}})}{\omega(A_{{i}})} \right ),\\[2mm]
\tilde{\iota}^{M}_{_{k}}(G) &\isdef& \displaystyle{\min_{{
\{A_{{i}}\}^{{k}}_{{1}} \in  {\mathcal P}_{{k}}(V) }} } \
 \max_{1\leq i\leq k}\
\frac{ c(A_{{i}})}{\omega(A_{{i}})}.
\end{eqnarray*}
We call a weighted graph $G$, mean (resp. max) {\it $k$-geometric}, if $\iota^m_k(G)=\tilde{\iota}^m_k(G)$ (resp. $\iota^M_k(G)=\tilde{\iota}^M_k(G)$). Also, $G$ is called mean (resp. max) {\it supergeometric}, if it is mean (resp. max) $k$-geometric for all $2\leq k\leq |V|$.
We call a vertex $v\in V$, a (mean or max) {\it $k$-outlier}, if there exists a minimizing subpartition achieving $\iota_k(G)$, where $v$ lies outside of the subpartition.
It is well-known that $\iota_2=\tilde{\iota}_2$  (see \cite{DHJ10}) and the common value is usually called the {\it Cheeger constant} or {\it edge expansion} in the literature.
}\end{defin}
In order to investigate computational complexity of these optimization problems, as is traditional for complexity results, we consider the corresponding decision problems. Also, since the isoperimetric parameters as operators on weight functions preserve scalar multiplication, without loss of generality, we assume that the range of all weight functions is $\mathbb Z$ (instead of $\mathbb Q$). Moreover, for simplicity we use a couple of notations. The acronyms IPP and NCP stand, respectively, for the isoperimetric and normalized cut problems. As before, the superscripts $m$ or $M$ determine the mean or max version of these problems, respectively\footnote{Note that whenever the superscripts $m$ and $M$ are omitted, it means that the statement is true for both versions.}, and subscript $k$ is used whenever $k$ is a constant and does not appear as part of the input.  For instance, NCP$^M_k$ refers to the following problem,
\begin{probc}
{NCP$\mathbf{^M_k}$}
{An integer $k$.}
{A weighted graph $G=(V,E,\omega,c)$ and a positive rational number $N\in {\mathbb Q}^+$.}
{Is it true that $\tilde{\iota}^M_k(G)\leq N$? In other words, is there a $k$-partition $\{A_{{i}}\}^k_1 \in {\cal P}_k(V)$ such that
$\displaystyle{\max_{1 \leq i \leq k}}\left\{ \frac{
c(A_{{i}})}{\omega(A_{{i}})}\right\} \leq N?$}
\end{probc}
By the following results,  the equivalent problems IPP$_{2}$ and NCP$_2$ are known to be $NP$-complete.
\begin{alphthm}\label{THMCMPLX}\\
{\rm (i) \cite{MOH89}} The problem {\rm NCP}$_2$ is $NP$-complete for (unweighted) general graphs with multiple edges.\\
{\rm (ii) \cite{SM00}} The problem {\rm NCP}$_2$ is $NP$-complete for bipartite planar weighted graphs.
\end{alphthm}
Note that, however, the planarity and the bipartiteness in Theorem~\ref{THMCMPLX}(ii) is not mentioned explicitly in \cite{SM00}, the above statement clearly follows  from the reduction provided in the proof. For a long time, it has been an open and challenging problem how well $\iota_2=\tilde{\iota}_2$ can be approximated in polynomial time for general graphs. The best current known result is due to Arora {\it et.~al.} which gives a polynomial time approximation algorithm that computes $\iota_2$ up to a factor of $O(\sqrt{\log n})$ for an $n$-vertex simple graph using semidefinite programming and geometric embedding (see \cite{ARV09,SHE09,VAZ01}). Also, Wu {\it et.~al.} present a polynomial time $((4+o(1))\log n)$-approximation algorithm for the minimum normalized cut on an $n$-vertex weighted graph \cite{WUCMS04}.

It is instructive to note that the {\it non-normalized} counterparts of the (mean) isoperimetric problem and the (mean) normalized cut problem are already known as
the {\it minimum $k$-subpartition} problem and {\it minimum $k$-way cut} problem, respectively (e.g. see \cite{NAIB08} for details and the background). Particularly, we know that there exists a  polynomial time $2(1-1/k)$-approximation algorithm for the minimum $k$-way cut problem which is based on
computation of the minimum $k$-subpartition problem \cite{NK07}. In Section~\ref{SECINEQ}, along the same lines, we prove a couple of basic inequalities (Theorem~\ref{THMINEQ}) which show that the isoperimetric numbers can be considered as an approximation for the minimum normalized cuts. In Section~\ref{SECACA} we consider the computational aspects of this approximation on weighted trees and we determine the computational complexity of the four main
isoperimetric and normalized cut problems. There we prove that IPP$^m$, NCP$^m$, and NCP$^M$ are all $NP$-complete for weighted trees, however, quite unexpectedly, it turns out that IPP$^M$ is a linear time solvable problem in this case. This is used to provide polynomial time $O(k)$-approximation algorithms for the $k$th isoperimetric number and the $k$th minimum normalized cuts.

In Section~\ref{SECFXDk} we focus on the case when the number of parts, $k$, is fixed and does not appear as part of the input.
For $k=2$, Mohar  \cite{MOH89} has proved that there exists a linear time algorithm that computes $\iota_2$ for trees.  In this section
as a generalization of Mohar's result we prove that, for each $k \geq 2$, all parameters $\iota^{M}_{{k}}$, $\iota^{m}_{{k}}$ and $\tilde{\iota}^{M}_{{k}}$ can be computed in polynomial time for weighted trees. We also show that this fact can not be extended to weighted graphs with bounded tree-width (unless $P=NP$!) by proving that for every fixed $k \geq 2$, {\rm IPP}$_k$ and {\rm NCP}$_k$ (in both max and mean versions) are $NP$-complete for bipartite weighted graphs with tree-width $2$.

In Section~\ref{SECUNIT}, we try to improve the hardness results to the case of unweighted (simple) graphs or trees. In this regard,
we provide a general reduction method that can be used to improve any known strong $NP$-completeness result for weighted graphs to an $NP$-completeness result for unweighted graphs.   Particularly, we use this reduction to prove the $NP$-completeness of {\rm NCP}$^M$ for unweighted trees and {\rm IPP}$_k$ and {\rm NCP}$_k$  for unweighted graphs.
\section{\label{SECINEQ}A basic inequality}
Our main result in this section is the following inequalities, which are counterparts of a similar result for the minimum k-way cut problem, that has already been proved in \cite{NK07}.
\begin{thm}\label{THMINEQ}
For every connected weighted graph $G$ and every integer $3\leq k\leq |V(G)|$,
\[\iota^M_k(G)\ \leq\ \tilde{\iota}^M_k(G)\ <\ (k-1)\ {\iota}^M_k(G),\]
\[\iota^m_k(G)\ \leq\ \tilde{\iota}^m_k(G)\ <\ 2(1-\frac{1}{k})\ \iota^m_k(G).\]
\end{thm}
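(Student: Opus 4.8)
Both left-hand inequalities are immediate, since every $k$-partition is a $k$-subpartition, i.e. ${\mathcal P}_k(V)\subseteq{\mathcal D}_k(V)$, so the minimum defining each $\iota_k(G)$ ranges over a superset of the feasible sets for $\tilde\iota_k(G)$; hence $\iota^M_k(G)\le\tilde\iota^M_k(G)$ and $\iota^m_k(G)\le\tilde\iota^m_k(G)$. For both right-hand inequalities I would fix a subpartition $\{A_i\}_1^k$ attaining the relevant $\iota_k(G)$, set $R\isdef V\setminus\bigcup_i A_i$ for the left-out vertices, and build an honest $k$-partition by absorbing $R$ into the parts while bounding the resulting growth of the objective. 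The two tools are the subadditivity $c(\bigcup_{i\in I}A_i)\le\sum_{i\in I}c(A_i)$ for disjoint sets and the exact merging identity $c(A_j\cup R)=c(A_j)+c(R)-2\,c(A_j,R)$, used together with $\sum_j c(A_j,R)=c(R)$ and $c(R)=\sum_i c(R,A_i)\le\sum_i c(A_i)$ (every edge leaving $R$ enters some $A_i$, hence is charged to that part's cut).

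For the max version put $\mu\isdef\iota^M_k(G)=\max_i c(A_i)/\omega(A_i)$, so $c(A_i)\le\mu\,\omega(A_i)$ for all $i$, and dump all of $R$ into a heaviest part, choosing $j$ with $\omega(A_j)=\max_i\omega(A_i)$ and passing to the partition $\{A_i:i\ne j\}\cup\{A_j\cup R\}$. Each unchanged part keeps ratio $\le\mu$, while on the complement $c(A_j\cup R)=c(\bigcup_{i\ne j}A_i)\le\sum_{i\ne j}c(A_i)\le\mu\sum_{i\ne j}\omega(A_i)\le\mu(k-1)\omega(A_j)$, so the enlarged part has ratio at most $(k-1)\mu\,\omega(A_j)/(\omega(A_j)+\omega(R))\le(k-1)\mu$. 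This gives $\tilde\iota^M_k(G)\le(k-1)\iota^M_k(G)$. Strictness is exactly where $k\ge3$ is needed: since $G$ is connected and every $A_i$ is a proper nonempty subset we have $\mu>0$, so if $R=\emptyset$ then $\{A_i\}$ is itself a partition with maximum ratio $\mu<(k-1)\mu$, and if $R\ne\emptyset$ then $\omega(A_j)+\omega(R)>\omega(A_j)$ forces the enlarged ratio strictly below $(k-1)\mu$; in both cases every part is strictly below $(k-1)\mu$.

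For the mean version write $s_i\isdef c(A_i)/\omega(A_i)$ and $S\isdef\sum_i s_i=k\,\iota^m_k(G)$, and again merge $R$ into a single part $A_j$, but now optimize over $j$. The merging identity turns the objective of the resulting partition into $\frac1k\big[S+N_j/d_j\big]$ with $d_j=\omega(A_j)+\omega(R)$ and $N_j=c(R)-s_j\,\omega(R)-2\,c(A_j,R)$. Bounding $\min_j$ by the average of these $k$ values weighted by $d_j$ cancels the per-$j$ denominators: since $\sum_j d_j=\Omega+k\,\omega(R)$ (with $\Omega\isdef\sum_i\omega(A_i)$) and $\sum_j N_j=(k-2)\,c(R)-\omega(R)\,S$ by $\sum_j c(A_j,R)=c(R)$, one obtains $\tilde\iota^m_k(G)\le\frac1k\big[S+\frac{(k-2)\,c(R)-\omega(R)\,S}{\Omega+k\,\omega(R)}\big]$. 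A direct rearrangement shows the target bound $\tilde\iota^m_k(G)<2(1-\frac1k)\,\iota^m_k(G)$ is equivalent to $(k-2)\,c(R)-\omega(R)\,S<\frac{k-2}{k}\,S\,(\Omega+k\,\omega(R))$, i.e. to an upper estimate on $c(R)$. Here optimality of $\{A_i\}$ enters decisively: replacing any $A_j$ by any nonempty $T\subseteq R$ is again a legal $k$-subpartition, whence $c(T)/\omega(T)\ge\max_i s_i$; in particular $R$ is at least as badly connected as every part, and in tandem with $c(R)\le\sum_i c(A_i)$ this is what should drive the factor $\frac{k-2}{k}=1-\frac2k$, paralleling the $2(1-\frac1k)$ bound for $k$-way cuts in \cite{NK07}.

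The main obstacle is precisely this last estimate. Merging inflates a part's numerator by as much as $c(R)$ while inflating its denominator by only $\omega(R)$, so a careless choice of $j$ can blow up $N_j/d_j$ when a light part is forced to swallow a heavy-cut $R$, and the plain bound $c(R)\le\sum_i c(A_i)$ does not by itself beat the $\omega$-weighted mean of the $s_i$ in the worst, positively correlated, configuration. Closing this gap is the crux, and I would do it by feeding in the stronger optimality fact that every nonempty $T\subseteq R$ has $c(T)/\omega(T)\ge\max_i s_i$ — which forbids $R$ from being simultaneously light in weight and heavy in cut relative to the parts — and, should the single-part merge still leave slack in the most correlated case, by distributing $R$ over several parts, assigning each connected component of the induced subgraph on $R$ to an adjacent part (one exists because $G$ is connected), so that the added cut is spread across the denominators $\omega(A_i)$ rather than concentrated on one. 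The hypothesis $k\ge3$ keeps the slack factor $1-\frac2k$ strictly positive, which is exactly why the two inequalities are genuinely about $k\ge3$ while the borderline case $k=2$ is special, where in fact $\iota_2=\tilde\iota_2$.
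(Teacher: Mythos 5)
Your lower bounds and your max-version argument are correct and complete: absorbing $R$ into a chosen part and bounding the merged cut through the complement, $c(A_j\cup R)=c(\cup_{i\ne j}A_i)\le\sum_{i\ne j}c(A_i)$, is essentially the paper's argument (the paper absorbs $R$ into the part with largest cut $c_{j_0}$ rather than largest weight; both one-line choices work), and your strictness discussion for $k\ge 3$ is fine.

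The mean version, however, has a genuine gap, which you yourself flag as ``the crux.'' Your algebra is valid up to the point of needing $(k-2)\,c(R)-\omega(R)S<\frac{k-2}{k}S\,(\Omega+k\,\omega(R))$, but this cannot be closed with the tools you name. The only upper bound on $c(R)$ available to you is $c(R)\le\sum_i c(A_i)=\sum_i s_i\,\omega(A_i)$, and when the $s_i$ and $\omega(A_i)$ are similarly ordered, Chebyshev's sum inequality gives $\sum_i s_i\,\omega(A_i)\ge\frac{1}{k}S\,\Omega$; so for small $\omega(R)$ the required inequality does not follow from that bound --- your own worry about the ``positively correlated configuration'' is exactly right, and it is your $d_j$-proportional averaging step that creates this dead end. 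Worse, the optimality facts you propose to feed in (every nonempty $T\subseteq R$ has $c(T)/\omega(T)\ge\max_i s_i$, equivalently $N_j\ge 0$) are \emph{lower} bounds on cuts of sets built from $R$, i.e., they point in the wrong direction and can never yield the upper bound on $c(R)$ you need; the component-distribution idea is left entirely unexecuted. The paper escapes by two moves you do not make. First, it never refers to $c(R)$ at all: also in the mean case it bounds the merged part by the complement, $c(A_j\cup R)\le C-c_j$, which reduces the task to showing $\min_j\,(C-2c_j)/w_j\le(1-\frac{2}{k})\sum_i c_i/w_i$. Second, it certifies this minimum not by averaging against the weights $d_j$ (which is precisely what fails under positive correlation) but via Lemma~\ref{LEMINEQ}, an averaging argument with data-dependent weights $t_j=k^2a_j/\lambda+k/(k-\lambda)$ that skew toward parts with large $c_j/w_j$. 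That lemma (or a substitute for it) is the missing ingredient; without it your mean inequality remains unproven.
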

Note that, when $k=2$, we have  $\iota_2(G)=\tilde{\iota}_2(G)$ for both max and mean versions \cite{DHJ10}.
Moreover, the result shows that the parameters $\iota_k^m(G)$ and $\iota_k^M(G)$ can be seen as approximations of the parameters $\tilde{\iota}^m_k(G)$ and $\tilde{\iota}^M_k(G)$, respectively. Therefore, from this point of view, the isoperimetric numbers can be considered as approximations for the minimum normalized cuts. We shall elaborate the computational aspects of these approximations in the forthcoming sections.
To prove  Theorem~\ref{THMINEQ}, we need the following lemma.
\begin{lem}\label{LEMINEQ}
Given an integer $k \geq 1$ and nonnegative numbers $\lambda, a_i, b_i \ (1\leq i\leq k)$, such that $0 < \lambda < k$ and $\sum_i a_i=1$, the following inequality holds,
\begin{equation}\label{INEQ}
\sum_{i=1}^k a_i b_i \leq \max_j{\left(\lambda a_j b_j+(1-\frac{\lambda}{k})b_j\right)}.
\end{equation}
Equality holds if and only if either for each $i$, $b_i=0$, or for each $i$ and some constant $b$,  $a_i={1}/{k}$ and $b_i=b$.
\end{lem}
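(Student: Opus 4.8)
The plan is to rewrite the right-hand side so that each $b_j$ is isolated, and then reduce the whole statement to a single scalar inequality that follows from concavity. Write $c \isdef 1 - \lambda/k$; since $0 < \lambda < k$ we have $0 < c < 1$, and moreover $\lambda a_j + c \geq c > 0$ for every $j$. Set $M \isdef \max_j\bigl(\lambda a_j b_j + (1-\tfrac{\lambda}{k}) b_j\bigr) = \max_j b_j(\lambda a_j + c)$. Then for each $j$ we have $b_j(\lambda a_j + c) \le M$, and dividing by the strictly positive quantity $\lambda a_j + c$ gives $b_j \le M/(\lambda a_j + c)$. Multiplying by $a_j \ge 0$ and summing over $j$ yields
\[
\sum_{i=1}^k a_i b_i \ \le\ M \sum_{j=1}^k \frac{a_j}{\lambda a_j + c}.
\]

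It therefore suffices to prove that $\sum_{j=1}^k a_j/(\lambda a_j + c) \le 1$ whenever $a_j \ge 0$ and $\sum_j a_j = 1$. Here I would introduce the function $f(a) \isdef a/(\lambda a + c)$ on $[0,\infty)$ and compute $f''(a) = -2\lambda c/(\lambda a + c)^3 < 0$, so $f$ is strictly concave. Jensen's inequality then gives $\frac{1}{k}\sum_j f(a_j) \le f\bigl(\frac{1}{k}\sum_j a_j\bigr) = f(1/k)$. The key arithmetic point---and what forces the particular offset $1-\lambda/k$---is that $f(1/k) = \frac{1/k}{\lambda/k + c} = \frac{1/k}{\lambda/k + 1 - \lambda/k} = 1/k$. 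Hence $\sum_j f(a_j) \le k\, f(1/k) = 1$, which combined with the display above gives $\sum_i a_i b_i \le M$.

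For the equality characterization I would trace back through the two inequalities used, namely $\sum_i a_i b_i \le M \sum_j f(a_j)$ and $M \sum_j f(a_j) \le M$. If $M = 0$, then each nonnegative term $b_j(\lambda a_j + c)$ vanishes, and since $\lambda a_j + c > 0$ this forces $b_j = 0$ for all $j$, which is the first equality case. If $M > 0$, equality in the second inequality requires $\sum_j f(a_j) = 1 = k\, f(1/k)$, which by strict concavity of $f$ (the equality condition in Jensen) holds exactly when all $a_j$ are equal, i.e. $a_j = 1/k$; equality in the first inequality then forces $b_j = M/(\lambda a_j + c) = M$ for every $j$, so all $b_j$ share the common value $b = M$. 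This is the second equality case, and a direct substitution confirms both cases indeed produce equality.

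The main obstacle is discovery rather than verification: one must spot that dividing the max-constraint by $\lambda a_j + c$ and summing converts the problem into controlling $\sum_j a_j/(\lambda a_j + c)$, and that this sum attains its maximum at the uniform point $a_j = 1/k$ precisely because the offset was chosen to be $1 - \lambda/k$, making $f(1/k) = 1/k$. Once the strict concavity of $f$ and this identity are in hand, the remaining steps are routine. I would also observe that the degenerate boundary $k=1$ needs no separate treatment, since then $a_1 = 1 = 1/k$ and both sides reduce to $b_1$.
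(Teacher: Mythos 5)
Your proof is correct, and it takes a genuinely different route from the paper's. The paper proceeds by a weighted-averaging, sum-of-squares identity: with $I := \sum_i a_i b_i$ and $c_j := \lambda a_j b_j + (1-\lambda/k)b_j$, it introduces the positive multipliers $t_j := k^2 a_j/\lambda + k/(k-\lambda)$ and verifies algebraically that $\sum_j t_j (c_j - I) = \sum_j (k a_j - 1)^2 b_j \geq 0$, whence $\max_j (c_j - I) \geq 0$; the equality case is read off from the vanishing of that sum of squares together with the fact that every $c_j - I$ must then be zero. You instead normalize the max-constraint: dividing $b_j(\lambda a_j + c) \leq M$ by $\lambda a_j + c$, multiplying by $a_j$, and summing reduces the lemma to the single estimate $\sum_j a_j/(\lambda a_j + c) \leq 1$, which you get from Jensen's inequality for the strictly concave $f(a) = a/(\lambda a + c)$ and the identity $f(1/k) = 1/k$. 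Equivalently, you are averaging with the weights $w_j := a_j/(\lambda a_j + c)$, for which $\sum_j w_j c_j = I$ exactly and $\sum_j w_j \leq 1$; the paper's weights $t_j$ are different and not proportional to yours. What each buys: the paper's computation is entirely elementary---no calculus, no Jensen---and is a one-line verification once the multipliers are written down, but it gives no hint of where they come from; your argument is more conceptual, explains why the offset $1-\lambda/k$ is exactly the right constant (it is what makes $f(1/k) = 1/k$), and yields the equality characterization transparently, since strict concavity forces $a_j = 1/k$ and term-wise equality then forces $b_j = M$ for all $j$---a point the paper dispatches with ``follows immediately from the proof'' even though it actually requires combining both the sum-of-squares condition and the condition that all $c_j$ coincide. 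Your handling of the degenerate cases ($M=0$ and $k=1$) is also complete.
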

\begin{proof}
{Let $I:=\sum_i a_ib_i$ and for every $1\leq j\leq k$, let $c_j:=\lambda a_jb_j+(1-\lambda/k)b_j$ and $t_j:=k^2a_j/\lambda+k/(k-\lambda)$. Then
\begin{eqnarray*}
\left(\sum_{j=1}^k t_j\right)\max_j (c_j-I)&\geq& \sum_{j=1}^k t_j
(c_j-I)\\
&=&\left(\frac{k(k-\lambda)}{\lambda}+\frac{k\lambda}{k-\lambda}-\frac{k^2} {\lambda}-\frac{k^2}{k-\lambda}\right)I+\sum_j\left(k^2a_j^2b_j+b_j\right)\\
&=&\sum_j\left(k^2a_j^2b_j+b_j-2k a_j b_j\right)=\sum_j (k a_j-1)^2b_j\geq 0.
\end{eqnarray*}
Thus, $\max_j (c_j-I)\geq 0$, as desired. Also, the equality conditions follow immediately from the proof.
}
\end{proof}
\begin{pproof}{Proof of Theorem~\ref{THMINEQ}}
{Lower bounds are trivial from the definitions. To prove the upper bounds, let  $\{A_i\}_1^k\in{\cal D}_k(V)$ be a $k$-subpartition of the vertices and define $A^*:=V\backslash (\cup_i A_i)$. For simplicity let $w_i:=\omega(A_i)$, $c_i:=c(A_i)$ and $C:=\sum_i c_i$.  For a fixed $j$\ $(1\leq j\leq k)$ define the $k$-partition $\pi^j:=\{B_i^j\}_{i}^k$ as $B_i^j:=A_i$ for all $i\neq j$ and $B_j^j:=A_j\cup A^*$.
Then, we have
\[c(B_j^j)\leq \sum_{i: i\neq j} c(A_i)=C- c_j.\]
Thus, for every $1\leq j\leq k$,
\begin{eqnarray}\label{INEQ1}
\max_i \left(\frac{c(B_i^j)}{\omega(B_i^j)}\right)&\leq& \max_{i: i\neq j}\left(\frac{c_i}{w_i}, \frac{C- c_j}{w_j+\omega(A^*)} \right),\\[2mm]
\label{INEQ2}
\sum_i \frac{c(B_i^j)}{\omega(B_i^j)}&\leq& \frac{C- c_j}{w_j+\omega(A^*)}+ \sum_{i:i\neq j}\frac{c_i}{w_i}.
\end{eqnarray}
Now, in order to prove the first inequality, assume that $G$ is not $k$-geometric (if $G$ is $k$-geometric the results are trivial) and let $\{A_i\}_1^k$ be a subpartition which achieves $\iota^M_k(G)$ and let $c_{j_0}=\max_i c_i$. By Inequality (\ref{INEQ1}), we have
\[\tilde{\iota}^M_k(G)\leq \frac{C- c_{j_0}}{w_{j_0}+\omega(A^*)}<
\frac{\sum_{i: i\neq j_0} c_i}{w_{j_0}}\leq (k-1)\frac{c_{j_0}}{w_{j_0}}\leq (k-1)\ \iota^M_k(G).\]
Now, in order to prove the second inequality, assume that $\{A_i\}_1^k$ be a subpartition which achieves $\iota^m_k(G)$. By Inequality (\ref{INEQ2}), we have
\begin{equation}\label{INEQ3}
k\ \tilde{\iota}^m_k(G)\leq \min_j\left(\frac{C- c_j}{w_j+\omega(A^*)}+\sum_{i:i\neq j}\frac{c_i}{w_i}\right)<
\min_j\left(\frac{C- 2 c_j}{w_j}\right)+\sum_i \frac{c_i}{w_i}.
\end{equation}
Now, let $C^*:=\sum_i (c_i/w_i)$, then applying Lemma~\ref{LEMINEQ}
with $a_j:=\frac{c_j/w_j}{C^*}$, $b_j:={w_j}$ and $\lambda:=2$, yields
\[\frac{C}{C^*}\leq \max_j \left(\frac {2c_j}{C^*}+(1-\frac{2}{k})w_j\right).\]
Therefore,
\begin{equation}\label{INEQ4}
\min_j\left(\frac{C-2c_j}{w_j}\right)\leq (1-\frac{2}{k}) \sum_i \frac{c_i}{w_i},
\end{equation}
and the result follows from Inequalities~(\ref{INEQ3}) and (\ref{INEQ4}).
}\end{pproof}
\begin{exm}{
In this example we show that the bounds in Theorem~\ref{THMINEQ} are sharp, in the sense that for every fixed $k\geq 3$, there is a family of weighted graphs $\{G_t\}_{t\in{\mathbb N}}$ such that $\tilde{\iota}^M_k(G_t)/\iota^M_k(G_t)$ tends to $(k-1)$ and $\tilde{\iota}^m_k(G_t)/\iota^m_k(G_t)$ tends to $2(1-\frac{1}{k})$ as $t$ tends to infinity.\\
Let $k$ be a constant. For every positive integer $t\geq k$, define the graph $G_t$ as a star with a central vertex $v$ of degree $k$ and $k$ vertices $v_1,\ldots, v_k$ each of degree 1. Also, define the weight functions $\omega$ and $c$ as follows,
\[\omega(v):=k,\ \omega(v_i):=t,\
c(vv_i):=1 \quad \forall\ 1\leq i\leq k.\]
Then, by the definitions we have
\[
\begin{array}{ll}
\displaystyle\iota^M_k(G_t)= \frac{1}{t},&\quad\displaystyle\tilde{\iota}^M_k(G_t)= \max\left(\frac{1}{t}, \frac{k-1}{t+k}\right)=\frac{k-1}{t+k},\\[10pt]
\displaystyle\iota^m_k(G_t)=\frac{1}{k}\sum_{i=1}^k \frac{1}{t}= \frac{1}{t},&\quad \displaystyle\tilde{\iota}^m_k(G_t)= \frac{1}{k}\ \left(\frac{k-1}{t+k}+\sum_{i=1}^{k-1}\frac{1}{t}\right)=
(1-\frac{1}{k})(\frac{1}{t+k}+\frac{1}{t}),
\end{array}
\]
where $\iota^M_k(G_t)$ and $\iota^m_k(G_t)$ are achieved for the disjoint sets $A_i:=\{v_i\}$, $1\leq i\leq k$, and $\tilde{\iota}^M_k(G_t)$ and $\tilde{\iota}^m_k(G_t)$ are achieved for the $k$-partition $\{B_i\}_1^k$, with $B_i:=\{v_i\}$, $1\leq i\leq k-1$ and $B_k:=\{v_k,v\}$.
The claim immediately follows from the above equalities.
}\end{exm}
\section{Algorithms, Complexity and Approximation Results}\label{SECACA}
In this section we consider IPP, NCP and their approximations  for weighted trees. In this regard, we shall prove that NCP$^M$ for weighted trees is $NP$-complete in the strong sense. Also, as a bit of a surprise, we show that the corresponding problem on subpartitions, i.e. IPP$^M$, can be solved in linear time using dynamic programming, where this can be used to obtain a polynomial time approximation for the minimum normalized cut of weighted trees.\\
 Let us recall that a problem with numerical parameters is said to be {\it $NP$-complete in the strong sense}, when it remains $NP$-complete, even when all of its numerical parameters are bounded by a polynomial in terms of the
length of the input. In other words, a strongly $NP$-complete problem remains $NP$-complete even when the input parameters are given in unary codes (instead of binary codes).
\begin{thm}\label{THMNCPM}
The problem {\rm NCP}$^M$  is $NP$-complete in the strong sense for weighted trees.
\end{thm}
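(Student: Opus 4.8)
The plan is to establish membership in $NP$ and then prove hardness by a polynomial reduction from the \textsc{3-Partition} problem, which is $NP$-complete in the strong sense; since the tree I build will carry only weights polynomially bounded in the given (unary) \textsc{3-Partition} data, the strong-sense conclusion follows automatically. Membership in $NP$ is immediate: a $k$-partition is a polynomial-size certificate, and checking that it is a genuine partition and that $\max_i c(A_i)/\omega(A_i)\le N$ takes polynomial time. For the reduction I would first \emph{normalize} an instance $a_1,\dots,a_{3m}$, $B$ (with $B/4<a_i<B/2$ and $\sum_i a_i=mB$) by adding a large common constant to each $a_i$ and adjusting $B$ accordingly, so that in the rescaled instance every subset whose total equals $B$ necessarily has exactly three elements. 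This keeps the numbers polynomial and the instance equivalent, and it is what will later let me upgrade ``total $=B$'' to ``is a triple''.

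Next I would build a weighted star-like tree in which the $m$ target triples are realized through $m$ distinguished \emph{anchor} vertices, while the $3m$ items appear as leaves whose weights encode the $a_i$. The anchor edge weights, the (small) anchor vertex weights, the item weights, and the threshold $N$ are to be tuned so that the normalized cut of a part organized around an anchor behaves monotonically in the total item-weight it collects, with the effect that $c(A)/\omega(A)\le N$ holds essentially exactly when that part captures total item-weight $B$. Taking $k=m$, the intended solution places one anchor together with one triple in each part; a valid \textsc{3-Partition} then makes every part have item-weight exactly $B$ and normalized cut exactly $N$, so $\tilde\iota^M_m(G)\le N$. This is the easy (forward) direction.

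The converse is where the real work lies, and it is exactly where the \emph{partition} constraint (as opposed to the subpartition constraint that renders IPP$^M$ linear-time) bites: every vertex must be covered. I would prove structural rigidity of an arbitrary feasible $k$-partition and then read off a \textsc{3-Partition}: first that each part must be organized around exactly one anchor (by counting $m$ anchors against $k=m$ parts and ruling out the cheap alternatives), then that feasibility forces each part's item-weight to be $B$ (combining the per-part threshold with the fact that the total item-weight $mB$ is split among exactly $m$ parts), and finally that $B/4<a_i<B/2$ together with the normalization forces such a part to be a single triple. The main obstacle I anticipate is making the gadget genuinely rigid against two \emph{opposing} failure modes simultaneously: the item edges must be heavy enough that no small or anchorless part can sneak under the threshold, yet the per-part constraint must act as a \emph{lower} bound on captured weight so that no part can become a low-ratio ``sink'' that absorbs items and thereby relaxes the requirement on the others (note that absorbing a vertex can only decrease a part's ratio, so a sink can never be forbidden by its own ratio alone — only by the induced infeasibility of the depleted parts). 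Reconciling these competing demands through a precise choice of weights and the exact anchor/part count, and verifying the resulting rigidity for \emph{every} feasible partition — including disconnected parts, which the definition permits — is the delicate heart of the proof.
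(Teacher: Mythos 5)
Your high-level frame (membership in $NP$, a strong reduction from 3-PARTITION, weights kept polynomial in the unary data) matches the paper's, and you have correctly identified the central danger: a part can only be constrained from above by its own ratio, so ``sinks'' can never be excluded locally. But your proposal stops exactly where the proof has to happen, and the mechanism you propose for the converse direction --- per-part \emph{weight quotas} (``captures item-weight exactly $B$'') enforced by the ratio threshold plus global accounting --- runs into a genuine obstruction, not a mere technicality. A constraint of the form $c(A)/\omega(A)\le N$ naturally enforces an \emph{average}-type condition, not a \emph{sum}-type condition. To make an anchored part ``need'' item-weight $B$, the anchor's standalone ratio must exceed $N$; but then (i) anchors can be dumped into the part containing the star's center, whose ratio only decreases when it absorbs adjacent vertices, releasing their quotas, and the resulting slack is soaked up by item-only parts, which are feasible precisely in the parameter regime that quota-forcing requires (killing item-only parts needs $N$ below the item-edge/item-weight scale, which contradicts making the anchor's bare ratio exceed $N$); and (ii) even if one forces exact accounting, two anchors may pool: a part with $j\ge 2$ anchors is then forced to capture $3j$ items of total weight $jB$, and such a multiset need not split into $B$-triples (e.g.\ item weights $0.26,0.26,0.26,0.40,0.41,0.41$ times $B$ sum to $2B$ but admit no $B$-triple), so YES-instances of your gadget need not yield 3-partitions. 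Requirements of type ``a lone anchor is infeasible'' and ``pooled anchors with everything are infeasible'' turn out to impose contradictory inequalities on the gadget parameters; this is the wall your ``delicate heart'' paragraph waves at but does not get through.

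The paper's proof circumvents this with an idea absent from your proposal: it never tries to force sums directly. It first reduces 3-PARTITION to an intermediate problem, SUBSET AVERAGE (partition $[n]$ into $m$ classes each of average $\alpha$), by shifting $x_i\mapsto x_i+B+1$ and adjoining $m$ dummy elements equal to $1$; since every shifted item exceeds the average, each class must contain exactly one dummy, which is what converts ``average $=\alpha$'' back into ``sum $=B$'' (your normalization, adding a common constant, does not do this: classes of size $2$ or $4$ with the correct average survive it). Then the tree gadget (a spider with center $u$ of weight $n\alpha$, legs $u_i$ of weight $l\,y_i$, and $l-1$ pendant leaves of weight $\alpha$ per leg, with $k=n(l-1)+m+1$ and $N=1/\alpha$) enforces exactly an average condition on each part meeting the legs: a counting argument bounds every part's size by $n-m+1$, hence at least $m$ parts meet the legs, and the threshold combined with a large scaling factor $l$ and integrality forces each such part's $y$-average to be at least $\alpha$; since the global average is exactly $\alpha$, all these averages equal $\alpha$. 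Sinks cause no harm there because absorbing the light center or leaves perturbs a part's average by an amount that the choice of $l$ makes smaller than the integrality gap. So the verdict is a genuine gap: the key idea that makes the converse direction work (average-forcing via an intermediate problem with dummy elements, rather than sum-forcing via anchors) is missing from your plan, and the plan as stated would not survive the cheating partitions described above.
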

\begin{proof}{
Clearly, {\rm NCP}$^M$  is in  $NP$. To prove the strong $NP$-completeness of the problem
we prove a sequence of reductions as follows,
$$\mbox{3-PARTITION}\ \leq^{^p}_{_{m}}\ \mbox{SUBSET AVERAGE}\ \leq^{^p}_{_{m}}\  {\rm NCP}^M,$$
where the well-known {3-PARTITION} problem and the SUBSET AVERAGE problem are defined as,
\begin{prob}{3-PARTITION}
{A positive integer $B\in {\mathbb Z}^+$ and $3m$ positive integers $x_1,\ldots,x_{3m}\in {\mathbb Z}^+$, such that $B/4< x_i< B/2$, for each $1\leq i\leq 3m$ and $\sum_{i=1}^{3m} x_i=mB$.}{Is there an $m$-partition $\{S_i\}^m_1 \in {\mathcal P}_m([3m])$ such that, for each $1\leq j\leq m$, $\sum_{i\in S_j} x_i=B$?}
\end{prob}
and
\begin{prob}{SUBSET AVERAGE}
{Positive integers $y_1,\ldots,y_n\in {\mathbb Z}^+$, where their average is an integer $\alpha$ along with a positive integer $m\leq n$.}
{Is there an $m$-partition $\{T_i\}^m_1  \in {\mathcal P}_m([n])$ such that, for each $1\leq j\leq m$, average of the elements with indices in $T_j$ is equal to $\alpha$, i.e. $\sum_{i\in T_j}y_i=\alpha{|T_j|}$?}
\end{prob}
\ \\
Note that the {3-PARTITION} problem is known to be strongly  $NP$-complete \cite{GJ79}, and consequently, the claim follows from the above reductions.\\[6pt]
%\begin{itemize}
%\item
{\bf Step 1.} $\mbox{3-PARTITION} \leq^{^p}_{_{m}} \mbox{SUBSET AVERAGE}$.\\
In the first step, we show that {SUBSET AVERAGE} is $NP$-complete in the strong sense, by a reduction from 3-PARTITION. Given $3m$ positive integers $x_1,\ldots,x_{3m}$ as an instance of 3-PARTITION, define for $1\leq i\leq 3m$,  $y_i:=x_i+B+1$ and for $3m+1\leq i\leq 4m$, $y_i:=1$. Now, consider $\{y_1,\ldots,y_{4m}\}$ together with the integer $m$ as an instance of {SUBSET AVERAGE}. The average of $y_i$'s is equal to $B+1$. If the answer to 3-PARTITION is yes, then there exists an $m$-partition $\{S_i\}^m_1 \in {\mathcal P}_m([3m])$ such that, for each $1\leq j\leq m$, $\sum_{i\in S_j} x_i=B$. Since $B/4< x_i< B/2$, each $S_j$ contains exactly $3$ elements. Now, by defining $T_j:=S_j\cup \{3m+j\}$, we have $\sum_{i\in T_j}y_i=4B+4=(B+1)|T_j|$. Hence, the answer to {SUBSET AVERAGE} is also yes.

On the other hand, assume that the answer to {SUBSET AVERAGE} is yes, then there exists an $m$-partition  $\{T'_i\}^m_1 \in {\mathcal P}_m([4m])$ such that, for each $1\leq j\leq m$, $\sum_{i\in T'_j} y_i=(B+1)|T'_j|$. Since $x_i$'s are positive, each $T'_j$ contains at least one of the elements $y_{3m+1},\ldots,y_{4m}$ and since there are $m$ disjoint subsets $T'_j$'s, each $T'_j$ contains exactly one of them. Thus, by defining $S'_j:= T'_j\backslash \{3m+1,\ldots,4m\}$, we have $\sum_{i\in S'_j}x_i=B$. Hence, the answer to {3-PARTITION} is also yes. This completes the reduction.\\[6pt]
%\item
{\bf Step 2.}$\ \mbox{SUBSET AVERAGE} \leq^{^p}_{_{m}}  {\rm NCP}^M$.\\
In the second step, we give a reduction from SUBSET AVERAGE to NCP$^M$ on weighted trees, where all the edge weights are equal to 1. Consider positive integers $y_1,\ldots,y_n$ with the average $\alpha$ and a positive integer $m \leq n$ as an instance of SUBSET AVERAGE. Also, let $l$ be an arbitrary positive fixed integer and construct a weighted tree $T=(V,E,\omega,c)$  as follows (see Figure~\ref{FIGA}).
\begin{gather*}
V:=\{u,u_i,v_{ij}\ |\  i=1,\ldots,n,\ j=1,\ldots,l-1\},\\
E:=\{uu_i, u_iv_{ij}\ |\ i=1,\ldots,n,\ j=1,\ldots,l-1\},\\
\omega(u):=n\alpha,\ \ \omega(u_i):=l y_i,\ \ \omega(v_{ij}):=\alpha,\ \forall\ 1\leq i\leq n,\ 1\leq j\leq l-1.
\end{gather*}
\begin{figure}[ht]
{\def\emline#1#2#3#4#5#6{%
\put(#1,#2){\special{em:moveto}}%
\put(#4,#5){\special{em:lineto}}}
\def\newpic#1{}
\unitlength .5pt
\special{em:linewidth 0.4pt}
\linethickness{0.4pt}
\begin{center}
\begin{picture}(200,220)(0,-120)
%Vertices
%Edge No. 0
\bezier{500}(-19, -32)(53, 25)(126, 82)
\put(53,41){\makebox(0, 0)[cc]{}}
%Edge No. 1
\bezier{500}(-19, -32)(53, -14)(126, 4)
\put(53,2){\makebox(0, 0)[cc]{}}
%Edge No. 2
\bezier{500}(126, 4)(176, 2)(226, 0)
\put(176,18){\makebox(0, 0)[cc]{}}
%Edge No. 3
\bezier{500}(-19, -32)(53, -70)(126, -109)
\put(53,-54){\makebox(0, 0)[cc]{}}
%Edge No. 4
\bezier{500}(126, 82)(176, 99)(226, 116)
\put(176,115){\makebox(0, 0)[cc]{}}
%Edge No. 5
\bezier{500}(126, 4)(176, 14)(226, 24)
\put(176,30){\makebox(0, 0)[cc]{}}
%Edge No. 6
\bezier{500}(126, 4)(176, -16)(226, -36)
\put(176,0){\makebox(0, 0)[cc]{}}
%Edge No. 7
\bezier{500}(126, -109)(176, -126)(226, -143)
\put(176,-110){\makebox(0, 0)[cc]{}}
%Edge No. 8
\bezier{500}(126, 82)(176, 88)(226, 93)
\put(176,104){\makebox(0, 0)[cc]{}}
%Edge No. 9
\bezier{500}(126, 82)(176, 69)(226, 55)
\put(176,85){\makebox(0, 0)[cc]{}}
%Edge No. 10
\bezier{500}(126, -109)(176, -93)(226, -77)
\put(176,-77){\makebox(0, 0)[cc]{}}
%Edge No. 11
\bezier{500}(126, -109)(176, -105)(226, -101)
\put(176,-89){\makebox(0, 0)[cc]{}}
%Vertex No. 0
\put(-19,-32){\circle*{16}}
\put(-19,-16){\makebox(0, 0)[cc]{$u$}}
%Vertex No. 1
\put(126,82){\circle*{16}}
\put(126,98){\makebox(0, 0)[cc]{$u_{_1}$}}
%Vertex No. 2
\put(126,4){\circle*{16}}
\put(126,20){\makebox(0, 0)[cc]{$u_{_2}$}}
%Vertex No. 3
\put(226,-11){\makebox(0, 0)[cc]{$\vdots$}}
\put(226,-36){\circle*{16}}
\put(263,-36){\makebox(0, 0)[cc]{$v_{_{2(l-1)}}$}}
%Vertex No. 4
\put(126,-40){\makebox(0, 0)[cc]{$\vdots$}}
\put(126,-109){\circle*{16}}
\put(126,-93){\makebox(0, 0)[cc]{$u_{_n}$}}
%Vertex No. 5
\put(226,116){\circle*{16}}
\put(256,116){\makebox(0, 0)[cc]{$v_{_{11}}$}}
%Vertex No. 6
\put(226,24){\circle*{16}}
\put(256,24){\makebox(0, 0)[cc]{$v_{_{21}}$}}
%Vertex No. 7
\put(226,0){\circle*{16}}
\put(256,0){\makebox(0, 0)[cc]{$v_{_{22}}$}}
%Vertex No. 8
\put(226,-115){\makebox(0, 0)[cc]{$\vdots$}}
\put(226,-143){\circle*{16}}
\put(263,-143){\makebox(0, 0)[cc]{$v_{_{n(l-1)}}$}}
%Vertex No. 9
\put(226,93){\circle*{16}}
\put(256,93){\makebox(0, 0)[cc]{$v_{_{12}}$}}
%Vertex No. 10
\put(226,80){\makebox(0, 0)[cc]{$\vdots$}}
\put(226,55){\circle*{16}}
\put(263,55){\makebox(0, 0)[cc]{$v_{_{1(l-1)}}$}}
%Vertex No. 11
\put(226,-101){\circle*{16}}
\put(256,-101){\makebox(0, 0)[cc]{$v_{_{n2}}$}}
%Vertex No. 12
\put(226,-77){\circle*{16}}
\put(256,-77){\makebox(0, 0)[cc]{$v_{_{n1}}$}}
\end{picture}
\end{center}}
\caption{A weighted tree corresponding to an instance of SUBSET AVERAGE.\label{FIGA}}
\end{figure}

\noindent
Also, let all the edge weights be equal to $1$. The weighted tree $T$ together with the constants $k:=n(l-1)+m+1$ and $N:=1/\alpha$ constitute an instance of  NCP$^M$.   By assuming the partition $\{T_i\}^m_1 \in {\mathcal P}_m([n])$ as a positive answer to SUBSET AVERAGE, we define the $k$-partition
$$\{A_0\} \cup \{A_t\}^m_1 \cup \{A_{ij} \ \ | \ \ 1\leq i\leq n,  1\leq j\leq l-1 \} \in {\mathcal P}_k(V)$$
 as follows,
\[A_0:=\{u\},\ \ A_t:=\{u_i| i\in T_t\},\ \forall\ 1\leq t\leq m,\ \ A_{ij}:=\{v_{ij}\},\ \forall\ 1\leq i\leq n, 1\leq j\leq l-1.\]
Now, we have
\[\frac{c(A_0)}{\omega(A_0)}= \frac{n}{n\alpha}, \ \ \frac{c(A_t)}{\omega(A_t)}=\frac{l|T_t|}{\sum_{i\in T_t} l y_i}=\frac{1}{\alpha},\ \ \frac{c(A_{ij})}{\omega(A_{ij})}=\frac{1}{\alpha},\]
and consequently, the answer to NCP$^M$ is also yes.

On the other hand, assume that $\{A'_i\}^k_1$ be a positive answer to NCP$^M$. We should find a positive answer to SUBSET AVERAGE. In this regard, we come up with a partition of $[n]$ into at least $m$ subsets, each of which with an average equal to $\alpha$ (then, if it is necessary, we may merge some subsets and find an $m$-partition). Since $|V|=nl+1$, we have $|A'_j|\leq n-m+1$ and there are at least $m$ sets $A'_j$ which has non-empty intersection with the set $\{u_i\}^n_1$. Now, define $T'_j:=\{i|\ u_i\in A'_j\}$. Among $T'_j$'s, the non-empty ones form a partition of the set $[n]$. We claim that the average of each set in this partition is equal to $\alpha$. Fix $j$, where $T'_j$ is non-empty and let $\sigma(T'_j):=\sum_{i\in T'_j}y_i$. Since $|A'_j|\leq n-m+1$, we have
\begin{equation}
\frac{1}{\alpha}\geq \frac{c(A'_j)}{\omega(A'_j)}\geq \frac{l|T'_j|-(n-m)}{l \sigma(T'_j)+(2n-m-1)\alpha}. \label{EQ1c}
\end{equation}
Now, we choose $l$ sufficiently larger than $m,n,\alpha$, such that
\begin{equation}
 \frac{|T'_j|}{\sigma(T'_j)}- \frac{l|T'_j|-(n-m)}{l \sigma(T'_j)+(2n-m-1)\alpha}< \frac{1}{n\alpha^2}. \label{EQ2c}
\end{equation}
Note that $l$ depends only on $n,m,\alpha$ and does not depend on $j$ and $T'_j$, because $|T'_j|$ and $\sigma(T'_j)$ are respectively bounded by $n$ and $n\alpha$ (for instance one may choose $l=\alpha^3n^2(3n-2m-1)$).
Since $\sigma(T'_j)\leq n\alpha$, Equations (\ref{EQ1c}) and (\ref{EQ2c}) yield
\[\frac{|T'_j|}{\sigma(T'_j)}<\frac{1}{\alpha}+\frac{1}{n\alpha^2}\leq\frac{1}{\alpha}+\frac{1}{\alpha\ \sigma(T'_j)}.\]
Hence, $|T'_j|/\sigma(T'_j)\leq 1/\alpha$ and this shows that the average of integers $(y_i:\ i\in T'_j)$ is at least $\alpha$. Finally, since non-empty sets $T'_j$'s form a partition of $[n]$, the average of integers $(y_i:\ i\in T'_j)$ is exactly equal to $\alpha$.
This completes the reduction and hence NCP$^M$ is $NP$-complete in the strong sense.
}\end{proof}
Although Theorem~\ref{THMNCPM} can be considered as an evidence for hardness of NCP$^M$ for weighted trees, it turns out that the corresponding problem for subpartitions, i.e. IPP$^M$, is surprisingly a tractable problem. To prove this, we begin by the following lemma.
\begin{lem}\label{LEMCON}
Given a weighted graph $G=(V,E,\omega,c)$ and integer $k\geq2$,
there exists a minimizing subpartition $\{A_{i}\}_1^k\in {\cal D}_{k}(V)$ attaining ${\iota}_{k}(G)$ such that the induced graph on each $A_i$ is connected.
\end{lem}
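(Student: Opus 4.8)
The plan is to start from an arbitrary minimizing $k$-subpartition and \emph{repair} its disconnected parts, replacing each such part by one of its connected components without increasing the objective. The first thing I would record is the crucial observation that the normalized outgoing flow $c(A_i)/\omega(A_i)$ of a part depends only on $A_i$ itself: since $c(A_i)=c(A_i,A_i^c)$ counts every edge leaving $A_i$ to the rest of $V$, it is completely insensitive to how the remaining vertices are grouped into the other parts. Consequently the parts may be treated independently, and I can modify any one of them while leaving the ratios of all the others untouched. (Because $\omega>0$ on every nonempty set, all the denominators below are positive and the ratios are well defined.)

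Next I would establish the additivity/mediant step. Suppose some part $A_i$ is disconnected in the induced subgraph, with connected components $C_1,\dots,C_r$ where $r\geq 2$. Since $G$ has no edges between distinct components of $A_i$, every edge incident to $C_s$ that is not internal to $C_s$ must leave $A_i$ altogether; hence $c(C_s)=c(C_s,V\setminus A_i)$, and summing over $s$ gives $c(A_i)=\sum_s c(C_s)$ while $\omega(A_i)=\sum_s \omega(C_s)$. The elementary mediant inequality then yields
\[\frac{c(A_i)}{\omega(A_i)}=\frac{\sum_s c(C_s)}{\sum_s \omega(C_s)}\ \geq\ \min_{1\leq s\leq r}\frac{c(C_s)}{\omega(C_s)},\]
so at least one component $C_{s^\ast}$ satisfies $c(C_{s^\ast})/\omega(C_{s^\ast})\leq c(A_i)/\omega(A_i)$.

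With these two ingredients the argument closes quickly. Starting from a minimizer $\{A_i\}_1^k$ of $\iota_k(G)$, for every index $i$ with $A_i$ disconnected I would replace $A_i$ by such a best component $C_{s^\ast}$. The resulting family is still a $k$-subpartition (the parts remain nonempty, pairwise disjoint, and $k$ in number), and, because each modified part has normalized flow no larger than before while the flows of all other parts are unchanged, the objective cannot increase. This works verbatim for both the mean version (one summand drops) and the max version (one entry of the maximum drops), which is exactly why the statement is phrased for $\iota_k$ without a superscript. Hence the new subpartition is again a minimizer, and all of its parts are connected.

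There is no genuine obstacle here; the only points requiring care are the bookkeeping identity $c(A_i)=\sum_s c(C_s)$ (which rests on the absence of edges between components) and the independence of the parts, after which the mediant inequality does all the work. I would finally remark that the replacement can be performed simultaneously for all disconnected parts in a single pass, since modifying one part does not affect the flows of the others.
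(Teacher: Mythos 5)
Your proposal is correct and follows essentially the same route as the paper: both rest on the mediant inequality $\frac{c(A)+c(B)}{\omega(A)+\omega(B)}\geq\min\left\{\frac{c(A)}{\omega(A)},\frac{c(B)}{\omega(B)}\right\}$ applied to a disconnected part, together with the observation that replacing that part by a sub-part with no larger normalized flow leaves the other parts' flows unchanged. The only (cosmetic) difference is that the paper splits a disconnected part into two pieces and iterates the removal, while you decompose directly into all connected components and keep a best one in a single pass.
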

\begin{proof}{
Let $\{A_i\}_1^k$ be a minimizing subpartition achieving ${\iota}_{k}(G)$ and assume that the induced graph $G$ on $A_{1}$ is not connected. Therefore $A_{1}=A\sqcup B$, where there is no edges between $A$ and $B$, we have
$$\min\left\{\frac{c(A)}{\omega(A)},\frac{c(B)}{\omega(B)}\right\}\leq \frac{c(A)+c(B)}{\omega(A)+\omega(B)}=\frac{c(A_{1})}{\omega(A_{1})}.$$
Hence, we may remove one of the sets $A$ or $B$ from $A_1$, such that the resulting subpartition remains minimizing. By continuing this process, we can find a minimizing subpartition with connected components.
}\end{proof}
\begin{thm}\label{THMIPPM}
There is a polynomial time  algorithm that decides {\rm IPP}$^M$ for weighted trees whose arithmetic complexity is $O(n)$.
\end{thm}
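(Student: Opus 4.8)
The plan is to reduce the decision problem to a single packing optimization on the tree and then to solve that by a bottom-up dynamic program. By Lemma~\ref{LEMCON}, the value $\iota^M_k(T)$ is attained by a subpartition all of whose parts induce connected subgraphs, i.e. subtrees. Since $\omega(A)>0$, the constraint $c(A)/\omega(A)\le N$ is equivalent to the linear inequality $c(A)\le N\,\omega(A)$; call a nonempty connected vertex set \emph{feasible} when it satisfies this. Then $\iota^M_k(T)\le N$ holds if and only if $T$ contains $k$ pairwise disjoint feasible subtrees, which in turn holds if and only if the maximum number $M$ of pairwise disjoint feasible subtrees satisfies $M\ge k$ (given more than $k$, keep any $k$ of them). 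So the entire task is to compute $M$ and compare it with $k$.

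To compute $M$ I would root $T$ at an arbitrary vertex and process the vertices in post-order. For a connected set $A\subseteq T_v$ containing $v$, define its \emph{residual} $r(A):=N\,\omega(A)-c_{T_v}(A)$, where $c_{T_v}(A)$ counts only the cut edges of $A$ lying strictly inside the subtree $T_v$, the edge from $v$ to its parent being deliberately excluded and charged later at the parent. For each vertex $v$ I would keep two items: an \emph{open} pair $(m_v,r_v)$, where $m_v$ is the largest number of completed feasible subtrees packable in $T_v$ subject to $v$ belonging to a still-open subtree, and $r_v$ is the largest residual of that open subtree among all packings attaining $m_v$; and a \emph{closed} value $\hat m_v$, the largest number of feasible subtrees packable in $T_v$ when $v$ lies in no open subtree. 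At a leaf the open pair is $(0,N\omega(v))$. At an internal $v$ the open subtree is grown from $\{v\}$ (initial residual $N\omega(v)$, count $0$): for each child $w$ I compare \emph{connecting} it, which adds $(m_w,r_w)$ to the running open pair because edge $vw$ becomes internal, against \emph{severing} it, which contributes $\hat m_w$ completed subtrees and lowers the residual by $c(vw)$ because $vw$ becomes a cut edge of $v$'s open subtree. As the children's contributions are additive and independent, the lexicographically best (count first, residual second) open pair is obtained child by child by taking the larger-count option, breaking ties toward larger residual. Finally $\hat m_v=\max\{\sum_w \hat m_w,\ m_v+\mathbf{1}[\,r_v-c(v,\mathrm{par}(v))\ge 0\,]\}$ (the two terms being ``$v$ unused'' and ``the subtree rooted at $v$ closed off feasibly''), and $M=\hat m_r$ at the root, where the absent parent edge is charged $0$.

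The heart of the matter, and the step I expect to be the main obstacle, is justifying that the one-dimensional summary $(m_v,r_v)$ together with the lexicographic rule is \emph{sufficient}, i.e. that it is never profitable to trade a completed subtree for extra residual. I would prove this by an exchange argument. Fix the choices made strictly above $v$; as the open subtree is pushed upward these choices add a fixed amount $\Delta$ to its residual and a fixed number of completed subtrees, independently of the internal configuration of $T_v$, and the open subtree eventually contributes $+1$ to the total exactly when its final residual $r+\Delta$ is nonnegative (otherwise it is abandoned and its vertices go unused). Comparing the retained configuration $A$ (count $m$, residual $r_A$) against any alternative $B$ (count $m'$, residual $r_B$) that is lexicographically no larger, the difference of total objectives is $(m-m')+\big(\mathbf{1}[\,r_A+\Delta\ge 0\,]-\mathbf{1}[\,r_B+\Delta\ge 0\,]\big)$. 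If $m>m'$ then the first term is at least $1$ while the bracketed term is at least $-1$, so the difference is nonnegative; if $m=m'$ then $r_A\ge r_B$ forces the bracketed term to be nonnegative. In either case $A$ is no worse than $B$, so keeping the lexicographic maximum loses nothing and the scalar residual is a faithful summary. I would also record the elementary monotonicity $m_w\le\hat m_w$ (abandon the open subtree to turn an open packing into a closed one), which is what makes the formula for $\hat m_v$ correct in the infeasible case.

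Completeness of the recurrence then follows because every global packing, restricted to $T_v$, either has $v$ in an open subtree (captured optimally by $(m_v,r_v)$) or has $v$ closed or unused (captured by $\hat m_v$), while the exchange argument above supplies the optimality of the retained states. Each vertex is handled in time proportional to its number of children, so, counting arithmetic operations on the rationals as unit cost, the total work is $O(\sum_v \deg(v))=O(n)$. Combining the reduction with this linear-time evaluation and the comparison $M\ge k$ yields the claimed $O(n)$ decision procedure for \textrm{IPP}$^M$ on weighted trees.
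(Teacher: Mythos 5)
Your proposal is correct, and the algorithm it yields coincides, decision for decision, with the paper's: for a child $w$, your rule severs off a completed part exactly when $\hat m_w>m_w$, i.e. when the accumulated set at $w$ is feasible including its parent-edge charge (the paper's first case $\gamma(v_i)+c(e)\le N\omega(v_i)$); otherwise connecting beats severing exactly when $r_w+c(vw)>0$ (the paper's contraction case), and the remaining case is deletion. What is genuinely different is the correctness architecture. The paper enlarges the problem class by adding a ground-flow function $\gamma:V\to\mathbb{Q}$, so that deleting or contracting a leaf produces another instance of the same generalized problem, and then proves three instance equivalences --- $(T,k,N)$ is equivalent to $(T\backslash v,k-1,N)$, to $(T/e,k,N)$, or to $(T\backslash v,k,N)$ according to the case --- invoking Lemma~\ref{LEMCON} inside each equivalence; correctness is then pure self-reduction, and no optimal count or value needs to be carried in the proof. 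You instead recast the decision as computing the maximum number $M$ of pairwise disjoint feasible subtrees and run a dynamic program over the two-component state (completed count, residual of the open subtree) together with a closed count, and you correctly identify and discharge the one nontrivial obligation this creates, namely the lexicographic exchange argument showing that the scalar residual is a sufficient summary of the open part. Your route pays for that dominance lemma but buys something the paper leaves implicit --- a single pass computes $M$ itself for any fixed $N$, rather than only answering the comparison with $k$ --- and it avoids generalizing the problem class; the paper's route is leaner because each local move is justified once and for all as an equivalence of instances, the ground flow $\gamma$ being exactly the bookkeeping device that makes that induction close.
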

\begin{proof}{
We prove a stronger version of the theorem. We assume that in addition to the vertex and the edge weight functions, $\omega,c$, there exists another weight function $\gamma: V(T)\to \mathbb{Q}$ that intuitively can be considered as outgoing flows to the ground. Therefore, for every $A\subset V$, we define the outgoing flow from $A$ as $c(A):= \sum_{e\in E(A,A^c)} c(e)+\sum_{v\in A}\gamma(v)$ and we consider IPP$^M$ for these new weighted trees. It is clear that when $\gamma(v)=0$ for each $v\in V(T)$, the problem is the same as the classical IPP$^M$ introduced before.
Now, given a weighted tree $T=(V,E,\omega,c,\gamma)$ on $n$ vertices, an integer $k\geq 2$ and a number $N$ as the input of IPP$^M$, we perform the following algorithm on $T$ to decide if $\iota^M_k(T)\leq N$ and to find a proof (affirmative subpartition) if there exists any.

Let $v\in V$ be an arbitrary vertex and consider the rooted tree $T$ with the root $v$. Sort the vertices of $T$ as $v_1,\ldots,v_n=v$, in a way that the vertices at level $i+1$ precede the vertices at level $i$, for each $i$. This can be done in linear time by a breadth-first search. Note that in the following algorithm $\eta: V\to P(V)$ is a set function.

\begin{algorithm}[H]
\caption{Solve IPP$^M$}
\label{ALGIPPM}
{ \begin{algorithmic}
 \FOR{$i=1$ \TO $n$}
 \STATE $\eta(v_i)\leftarrow\{v_i\}$
 \ENDFOR
 \STATE $i,j\leftarrow 1$
 \WHILE{$j<k$ \AND $i\leq n$}
 \STATE Let $u$ be the unique parent of $v_i$ and $e:=uv_i\in E$ (if $i=n$, then $c(e)\leftarrow 0$)
 \IF{$\gamma(v_i)+c(e)\leq N \omega(v_i)$}
 \STATE $j\leftarrow j+1$, $A_j\leftarrow\eta(v_i)$, $\omega(A_j)\leftarrow\omega(v_i)$, $c(A_j)\leftarrow c(e)+\gamma(v_i)$, $\gamma(u)\leftarrow\gamma(u)+c(e)$
 \ELSIF{$\gamma(v)-c(e)<N \omega(v)$}
 \STATE $\eta(u)\leftarrow \eta(u)\cup \eta(v_i)$, $\omega(u)\leftarrow \omega(u)+\omega(v_i)$, $\gamma(u)\leftarrow\gamma(u)+\gamma(v_i)$
 \ELSE[i.e. $\gamma(v)-c(e)\geq N \omega(v)$]
 \STATE $\gamma(u)\leftarrow\gamma(u)+c(e)$
 \ENDIF
 \ENDWHILE
 \IF{j=k}
 \STATE \RETURN YES, $\iota_k^M(T)=\max_j\{c(A_j)/w(A_j)\}$ and $\{A_1,\ldots,A_k\}$
 \ELSE
 \STATE \RETURN NO
 \ENDIF
 \end{algorithmic}}
\end{algorithm}
Now, we prove the correctness of the algorithm. First, we fix a couple of notions. We say  two instances $(G_1,k_1,N_1)$ and $(G_2,k_2,N_2)$ are {\it equivalent} if the answer to IPP$^M$ for both of them are the same. Given a weighted graph $G=(V,E,\omega,c,\gamma)$ and a vertex $v\in V$, $G\backslash v=(V',E',\omega',c',\gamma')$ denotes the weighted graph obtained from $G$ by deleting the vertex $v$, where $\omega':=\omega|_{V'}$, $c':=c|_{E'}$ and for each $u\in V'$, $\gamma'(u):=\gamma(u)+\sum_{e=uv\in E} c(e)$. Also, for an edge $e\in E$, $G/e$ denotes the weighted graph obtained from $G$ by contracting the edge $e$, where the weight of the new vertex is defined as sum of the weights of the two old vertices. (If it is necessary we put together multiple edges and sum up their weights to get a simple graph.) Let $v$ be a leaf in $V(T)$ and $e=vu$ be the pendant edge.
\begin{enumerate}
\item If $\gamma(v)+c(e)\leq N \omega(v)$, then $(T,k,N)$ is clearly equivalent to $(T\backslash v,k-1,N)$.
\item If (1) is not the case and $\gamma(v)-c(e)<N \omega(v)$, then $(T,k,N)$ is equivalent to $(T/e,k,N)$. To see this, let $\pi:=\{A_i\}_1^k\in{\mathcal D}_k(V)$ be an affirmative answer for $T$, where the induced graph on each $A_i$ is connected (see Lemma~\ref{LEMCON}). If $u\not\in\cup A_i$, then $v\not\in\cup A_i$ (because $A_i$'s are connected) and hence, $\pi$ is also an affirmative answer for $T/e$. Now, assume that $u\in A_1$ and $v\not\in\cup A_i$. Define $A'_1:=A_1\cup\{v\}$, then,
\[c(A'_1)-N \omega(A'_1)=c(A_1)-N \omega(A_1)+\gamma(v)-c(e)-N \omega(v)<0.\]
Thus, the answer to $(T/e,k,N)$ is also yes.
\item Finally, if $\gamma(v)-c(e)\geq N \omega(v)$, then $(T,k,N)$ is equivalent to $(T\backslash v,k,N)$. To see this, as before let $\pi:=\{A_i\}_1^k\in{\mathcal D}_k(V)$ be an affirmative answer for $T$, where the induced graph on each $A_i$  is connected. If $v\not\in\cup A_i$, there is nothing to prove. If $v\in A_1$, then $u\in A_1$ (because $A_1$ is connected). Define $A'_1:=A_1\backslash v$, then,
\[c(A'_1)-N \omega(A'_1)=c(A_1)-N \omega(A_1)-\gamma(v)+c(e)+N \omega(v)\leq 0.\]
Thus, the answer to $(T\backslash v,k,N)$ is also yes.
\end{enumerate}
This shows that IPP$^M$ for weighted trees is self-reducible. Also, note that the runtime of the algorithm is clearly of order $O(n)$.
}\end{proof}
For an optimization problem, a {\it fully polynomial time approximation scheme} (FPTAS) is an algorithm that takes an instance of the problem together with a number $\epsilon>0$ and outputs a feasible solution within a factor $(1+\epsilon)$ of the optimal solution and its running time is bounded by a polynomial in the size of the instance and $1/\epsilon$.
By using Algorithm~\ref{ALGIPPM} as well as a standard iterative method, we can find an FPTAS to approximate $\iota^M_k(T)$. Also, using Theorem~\ref{THMINEQ}, we can find polynomial time approximation algorithms for the parameters $\tilde{\iota}^M_k(T)$, $\iota^m_k(T)$ and $\tilde{\iota}^m_k(T)$.
\begin{cor}Let $T$ be a weighted tree and $2\leq k\leq |V(T)|$ be an integer. \\
{\rm (i)} There exists an {\rm FPTAS} that approximates the parameter $\iota^M_k(T)$.\\
{\rm (ii)} For every $\epsilon>0$, there exists a polynomial time approximation algorithm that approximates the parameters $\tilde{\iota}^M_k(T)$,
$\iota^m_k(T)$ and $\tilde{\iota}^m_k(T)$, within factors $k-1+\epsilon$,  $k+\epsilon$ and  $2k-2+\epsilon$, respectively.
\end{cor}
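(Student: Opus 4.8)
The plan is to prove the two parts separately, using the linear-time decision procedure of Theorem~\ref{THMIPPM} (Algorithm~\ref{ALGIPPM}) as an oracle for part~(i), and the inequalities of Theorem~\ref{THMINEQ} as the conversion tool for part~(ii).

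\textbf{Part (i).} First I would upgrade the decision oracle to an optimization routine by bisection on the threshold~$N$. The value $\iota^M_k(T)$ equals $c(A_j)/\omega(A_j)$ for some part $A_j$ of an optimal subpartition, hence is a rational whose numerator is bounded by the total edge weight and whose denominator by $\omega(V)$; consequently $\iota^M_k(T)$ lies in an interval $[L,U]$ with $L>0$ and $\log(U/L)$ polynomial in the input length. Given $\epsilon>0$, I would bisect $[L,U]$ on the value scale, at each step querying the oracle on the rational midpoint $N=(a+b)/2$ and retaining the subinterval across which the YES/NO answer flips. Each query costs $O(n)$ arithmetic operations, and driving the additive width $(U-L)/2^{t}$ below $\epsilon L$ requires only $t=O(\log(U/L)+\log(1/\epsilon))$ queries; this is polynomial in the input length and in $\log(1/\epsilon)$, hence an FPTAS. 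The final endpoints give a lower estimate $a$ with $(1-\epsilon)\,\iota^M_k(T)\le a\le \iota^M_k(T)$ and an upper estimate $b$ with $\iota^M_k(T)\le b\le (1+\epsilon)\,\iota^M_k(T)$.

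\textbf{Part (ii).} Since only $\iota^M_k$ is directly computable, the idea is to express each of the three remaining quantities through $\iota^M_k$. Alongside the two chains of Theorem~\ref{THMINEQ}, namely $\iota^M_k\le\tilde\iota^M_k<(k-1)\iota^M_k$ and $\iota^m_k\le\tilde\iota^m_k<2(1-\tfrac1k)\iota^m_k$, I would record one elementary bridge between the mean and the max numbers: for any single subpartition the mean of its $k$ nonnegative ratios is at most their maximum, which is in turn at most their sum, so evaluating a subpartition that is optimal for one parameter against the other gives
\[
\iota^m_k(T)\ \le\ \iota^M_k(T)\ \le\ k\,\iota^m_k(T).
\]
(For $k=2$ one uses instead $\iota_2=\tilde\iota_2$ from \cite{DHJ10}; all the factors below then collapse consistently.)

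Now fix $\epsilon>0$, let $\delta=\delta(k,\epsilon)$ be a small precision to be chosen, and run part~(i) to obtain $a,b$ with $(1-\delta)\iota^M_k\le a\le\iota^M_k\le b\le(1+\delta)\iota^M_k$. For $\tilde\iota^M_k$ I would return $a$: it is a genuine lower bound, $a\le\iota^M_k\le\tilde\iota^M_k$, while $\tilde\iota^M_k<(k-1)\iota^M_k\le (k-1)(1-\delta)^{-1}a$, so $a\le\tilde\iota^M_k<(k-1+\epsilon)a$ for $\delta$ small. For $\iota^m_k$ I would return $b$: here $\iota^m_k\le\iota^M_k\le b$ and $b\le(1+\delta)\iota^M_k\le(1+\delta)k\,\iota^m_k$, yielding factor $k+\epsilon$. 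For $\tilde\iota^m_k$ I would return $\tfrac1k a$: it is a lower bound because $\tfrac1k a\le\tfrac1k\iota^M_k\le\iota^m_k\le\tilde\iota^m_k$, whereas $\tilde\iota^m_k<\tfrac{2(k-1)}{k}\iota^m_k\le\tfrac{2(k-1)}{k}\iota^M_k\le (2k-2)(1-\delta)^{-1}(\tfrac1k a)$, yielding factor $2k-2+\epsilon$. In every case a suitable $\delta$ absorbs the multiplicative slack of the FPTAS into the additive~$\epsilon$.

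\textbf{Main obstacle.} The only genuinely quantitative point is the iteration count in part~(i): one must check that $\iota^M_k$ lives in a range of polynomially bounded logarithmic width so that bisection halts after polynomially many oracle calls, which rests on the numerator and denominator of the optimal ratio being bounded by the total weights. In part~(ii) the step not immediate from Theorem~\ref{THMINEQ} is the mean--max bridge $\iota^m_k\le\iota^M_k\le k\,\iota^m_k$, together with the bookkeeping required to compose it with Theorem~\ref{THMINEQ} in the correct direction, so that the factors $k-1$, $k$ and $2k-2$ (rather than their reciprocals) emerge.
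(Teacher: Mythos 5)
Your proposal is correct and follows essentially the same route as the paper: part (i) is the same bisection over an interval of polynomially bounded logarithmic width (the paper uses $[2c_0/W,\,C/w_0]$) using Algorithm~\ref{ALGIPPM} as the decision oracle, and part (ii) is exactly the paper's one-line argument---Theorem~\ref{THMINEQ} combined with the bridge $\iota^m_k(T)\leq\iota^M_k(T)\leq k\,\iota^m_k(T)$---which you merely spell out with explicit factor bookkeeping.
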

\begin{proof}{
Given a weighted tree $T=(V,E,\omega,c)$, an integer $2\leq k\leq |V|$ and a number $\epsilon>0$, define $w_0:=\min_{v\in V}w(v)$, $W:=\sum_{v\in V}w(v)$, $c_0:=\min_{e\in E}c(e)$ and $C:=\sum_{e\in E}c(e)$. Therefore, $\iota_k^M(T)$ is within the interval $[2c_0/W,C/w_0]$. We start with this interval and do the following iteratively.\\
Let $[a_i,b_i]$ be the interval obtained in step~$i$. Then, in step~$i+1$, using Algorithm~\ref{ALGIPPM}, check if  $\iota_k^M(T)\leq (a_i+b_i)/2$ and find an interval containing  $\iota_k^M(T)$ whose length is $(b_i-a_i)/2$. We continue this process for $t$ steps, where $t:=\log(1/(2\epsilon))+\log(CW/c_0w_0-2)$. Finally, we come to an interval $[a_t,b_t]$ containing $\iota_k^M(T)$ whose length is $(C/w_0-2c_0/W)/2^t=\epsilon\ 2c_0/W$. We output $b_t$ as the approximation for $\iota_k^M(T)$. We have
\[\iota_k^M(T)\leq b_t= a_t+\epsilon\frac{2c_0}{W}\leq (1+\epsilon)\iota_k^M(T).\]
Also, the arithmetic complexity of this algorithm is $$O(nt)=O\left(n\left(\log(\frac{1}{2\epsilon})+\log(\frac{CW}{c_0w_0}-2)\right)\right),$$
and consequently, this is an FPTAS that approximates $\iota_k^M(T)$.\\
Part~(ii) follows from Part~(i), Theorem~\ref{THMINEQ} and the fact that $\iota_k^m(T)\leq\iota_k^M(T)\leq k\ \iota_k^m(T)$.
}\end{proof}
The next result (Theorem~\ref{THMIPPm}) shows that the
approximation method previously used to approximate
$\tilde{\iota}_k^M$ by $\iota_k^M$ can not be applied to approximate $\tilde{\iota}_k^m$ by $\iota_k^m$, since, contrary to the max version,
IPP$^m$ appears to be an $NP$-complete problem for weighted
trees. To prove this, first we need the following simple lemma
that will also be used in the proof of Theorem~\ref{THMCGG}.
\begin{lem}\label{LEMDIF}
Let $G=(V,E,\omega,c)$ be a connected weighted graph and
$S=\{v_1,\ldots,v_s\}\subset V$ be a fixed subset of vertices.
Define $W:=\sum_{u\in V\backslash S}\omega(u)$, $C:=\sum_{e\in
E}c(e)$ and $c_0:=\min_{e\in E}c(e)$. If $s\leq k\leq |V|$ is an
integer and for each $1\leq i\leq s$, $\omega(v_i)\geq
({2CW}/{c_0})$, then there exists a minimizing partition $($resp.
subpartition$)$ achieving $\tilde{\iota}_k(G)$ $($resp.
${\iota}_k(G))$ in which all the vertices $v_1,\ldots,v_s$ are in
different parts. Also, none of the vertices in $S$ are $k$-outlier.
\end{lem}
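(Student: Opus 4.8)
The plan is to extract from the weight threshold the single quantitative fact that drives everything: any part containing a heavy vertex has a tiny normalized flow, strictly below that of any part avoiding $S$. First I would record two estimates. If a part $A$ contains some $v_i\in S$, then $c(A)\le C$ while $\omega(A)\ge\omega(v_i)\ge 2CW/c_0$, so $c(A)/\omega(A)\le c_0/(2W)$. On the other hand, if a nonempty part $B$ contains no vertex of $S$, then $B$ is a proper subset of $V$ (as $s\le k$), and since $G$ is connected $B$ has at least one boundary edge, whence $c(B)\ge c_0$ and $\omega(B)\le W$, giving $c(B)/\omega(B)\ge c_0/W$. The factor-two gap between $c_0/(2W)$ and $c_0/W$ produced by the threshold $2CW/c_0$ is the whole engine of the argument.

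Next I would prove the separation statement by a local exchange. Starting from any minimizer (with connected parts, by Lemma~\ref{LEMCON} if convenient), suppose two heavy vertices $v_p,v_q$ share a part $A_1$. Since the $s$ heavy vertices then occupy at most $s-1\le k-1$ of the $k$ parts, at least one part $A_2$ contains no vertex of $S$; move $v_q$ into $A_2$. The key observation is that relocating a single vertex between two fixed parts changes the cut of no other part, so only $A_1$ and $A_2$ are affected, and both now contain a heavy vertex, hence each has ratio $\le c_0/(2W)$. For the max version this keeps both touched ratios below the former light-part ratio $\ge c_0/W$, so the maximum does not increase; for the mean version the two touched terms sum to at most $2\cdot c_0/(2W)=c_0/W$, which is at most the single term $\ge c_0/W$ they replace, so the mean does not increase. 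Thus the exchange preserves optimality while strictly decreasing the number of parts hosting two heavy vertices, and iterating separates all of $S$. This works uniformly for partitions and subpartitions and for both versions.

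For the no-outlier claim I would argue by contradiction within the subpartition setting. If a minimizer omits some $v_i\in S$, then at most $s-1<k$ parts can host heavy vertices, so at least one part is light, with ratio $\ge c_0/W$; moreover no heavy part can realize the maximum ratio, since its ratio is $\le c_0/(2W)<c_0/W$. Hence the maximum-ratio part $A_{j_0}$ is light. Replacing $A_{j_0}$ by the singleton $\{v_i\}$ again leaves every other part's cut untouched and installs a part of ratio $\le c_0/(2W)$. For the mean version this drops the objective sum by at least $c_0/W-c_0/(2W)=c_0/(2W)>0$, contradicting minimality, so no minimizer omits $v_i$; combined with the previous paragraph this produces a minimizer that includes and separates all of $S$.

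The step I expect to be the main obstacle is precisely the strictness underlying no-outlier in the \emph{max} version. The exchanges above are only weakly improving for the maximum when the largest ratio is attained simultaneously by several light parts: relocating or inserting $v_i$ repairs only one of them, so a minimizer can genuinely leave a heavy vertex outside while still attaining $\iota^M_k(G)$. My plan is therefore to establish the clean strict statement for the mean objective as above, and to obtain the max conclusion either by a secondary tie-breaking argument (selecting, among all minimizers, one with the maximal number of heavy vertices included, and showing that an omitted heavy vertex always admits a strictly improving move once ties are accounted for) or by observing that the uses of this lemma only require the existence of a minimizer containing and separating $S$, which the weakly-improving exchanges already guarantee.
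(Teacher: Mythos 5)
Your proposal is correct, and its engine is exactly the paper's: the paper proves the mean--subpartition case by the very same exchange (move one of two cohabiting heavy vertices into a part disjoint from $S$, using that a part containing a heavy vertex has ratio at most $C/\omega(v_i)\le c_0/(2W)$ while a nonempty part avoiding $S$ has ratio at least $c_0/W$), and then simply declares that ``the other cases are similar.''

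The obstacle you flagged in the max version is not a weakness of your write-up but a genuine defect of the lemma as stated, which the paper's ``similar'' glosses over. Your suspicion that a minimizer can leave a heavy vertex outside while still attaining $\iota^M_k(G)$ is realized concretely: let $G$ be the star $K_{1,4}$ with center $v_1$ and leaves $a,b,c,d$, all edge weights $1$, all leaf weights $1$, and $\omega(v_1)=32$; then $C=4$, $W=4$, $c_0=1$, so $\omega(v_1)=2CW/c_0$, and the hypotheses hold with $S=\{v_1\}$, $k=3$. Every nonempty set of leaves has normalized flow exactly $1$, and every part containing $v_1$ has ratio at most $1/8$, so $\iota^M_3(G)=1$ and $\{\{a\},\{b\},\{c\}\}$ is a minimizing subpartition omitting $v_1$; thus $v_1$ \emph{is} a max $3$-outlier. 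Consequently your first fallback (tie-breaking among minimizers to manufacture a strictly improving move) cannot succeed, while your second fallback is the right resolution: for the max version one should claim only the existence of a minimizer that contains all of $S$ and separates it, which your weakly improving exchanges do prove (if some $v_i$ is omitted, the heavy vertices occupy at most $s-1\le k-1$ parts, so a part avoiding $S$ is always available to receive it without increasing the maximum). This weaker statement is all the paper ever uses in the max setting --- the proof of Theorem~\ref{THMCGG} needs only a separating minimizing $2$-partition --- whereas the no-outlier property proper is invoked in the proof of Theorem~\ref{THMIPPm}, whose reduction concerns the mean version, where your strict-decrease argument (identical to the paper's) establishes it in full.
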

\begin{proof}
{We prove the lemma for ${\iota}^m_k(G)$. The other cases are
similar. Let $\{A_i\}_1^k$ be a minimizing subpartition achieving
${\iota}^m_k(G)$ and assume that $A_1$ contains two vertices in
$S$, say $v_1,v_2$ and $\omega(v_1)\geq \omega(v_2)\geq
({2CW}/{c_0})$. Then there is a subset, say $A_2$, which contains
no vertex of $S$. Now, move $v_2$ from $A_1$ to $A_2$ and call the
new subsets $A'_1$ and $A'_2$. Thus
\[\frac{c(A'_1)}{\omega(A'_1)}+\frac{c(A'_2)}{\omega(A'_2)}\leq \frac{2C}{\omega(v_2)}\leq
\frac{c(A_2)}{\omega(A_2)}<\frac{c(A_1)}{\omega(A_1)}+\frac{c(A_2)}{\omega(A_2)}.\]
This contradicts the fact that $\{A_i\}_1^k$ is a minimizing
subpartition. Therefore, all of the vertices $v_1,\ldots,v_s$ are
in different parts. Also, if a vertex $v_i$ does not lie in the
subpartition, we may add it to a subset $A_j$ which has no
intersection with $S$ to find a new subpartition contradicting the
minimality of $\{A_i\}_1^k$. Hence, no vertex in $S$ can be a
$k$-outlier. }\end{proof}
\begin{thm}\label{THMIPPm}
The problems {\rm IPP}$^m$ and {\rm NCP}$^m$ are $NP$-complete for weighted trees.
\end{thm}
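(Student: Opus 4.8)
The plan is to dispatch membership in $NP$ at once---a $k$-(sub)partition is a polynomial-size certificate and its mean normalized cut $\frac1k\sum_i c(A_i)/\omega(A_i)$ is computable exactly in rational arithmetic---and then to prove hardness by a polynomial reduction from the strongly $NP$-complete problem 3-PARTITION, reusing the intermediate SUBSET AVERAGE problem and the tree gadget of Theorem~\ref{THMNCPM}. Given integers $y_1,\dots,y_n$ of average $\alpha$ together with a target $m$, I would build a weighted tree on a center $u$, item-vertices $u_1,\dots,u_n$ carrying the weights $y_i$, and a bank of very heavy \emph{anchor} vertices whose weights exceed the bound $2CW/c_0$ of Lemma~\ref{LEMDIF}. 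By that lemma the anchors are pinned into $m$ distinct parts and are never $k$-outliers, which fixes the skeleton of $m$ bins among which the items must be distributed; the parameters $k$ and $N$ are then chosen so that the target value $m/\alpha$ of the mean objective is met exactly when every bin has average $\alpha$, i.e.\ exactly when SUBSET AVERAGE (hence 3-PARTITION) is solvable.

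The essential difficulty---and the precise reason the mean version is not tame like the linear-time max version of Theorem~\ref{THMIPPM}---is that the objective is a \emph{sum} $\sum_j |T_j|/\sigma(T_j)$ of ratios rather than a maximum, so a bin with a large normalized cut can be paid for by a bin with a small one. In fact the minimum of $\sum_j |T_j|/\sigma(T_j)$ under the sole constraints $\sum_j|T_j|=n$ and $\sum_j\sigma(T_j)=n\alpha$ equals $\left(\sum_j\sqrt{|T_j|}\right)^2/(n\alpha)$ and is attained at \emph{lopsided} partitions (one huge bin, the rest singletons), whose value lies strictly below the balanced value $m/\alpha$; a naive threshold at $m/\alpha$ therefore accepts many partitions that do \emph{not} solve SUBSET AVERAGE. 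Overcoming this is the \textbf{main obstacle}, and the plan is to engineer the weights so that every near-optimal partition is forced to have \emph{equal bin sizes}: once each $|T_j|$ is pinned to a common value $c$, the identity above collapses to $m/\alpha$, and strict convexity of $t\mapsto 1/t$ then gives $\sum_j |T_j|/\sigma(T_j)\ge m/\alpha$ with equality \emph{iff} all the $\sigma(T_j)$ coincide, i.e.\ iff every bin has average $\alpha$. I expect the size-fixing gadget (uniform offsets in the spirit of the $y_i:=x_i+B+1$ step, or dedicated per-bin markers), together with the verification that lopsided partitions then strictly exceed $N$, to be the technically delicate heart of the argument; integrality of the $y_i$ does the final work of upgrading the convex inequality into the exact equalities $\sigma(T_j)=\alpha c$.

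Finally I would arrange that the same instance settles both problems. Lemma~\ref{LEMCON} lets me assume the optimizing subpartition induces connected parts, and Lemma~\ref{LEMDIF} already forbids the anchors from being $k$-outliers; a short argument---choosing $N$ so that discarding any remaining item or leaf strictly raises the mean---shows that an optimal subpartition in fact covers all of $V$, so that $\iota^m_k(T)=\tilde{\iota}^m_k(T)$ on the constructed tree. Consequently a single reduction proves the $NP$-completeness of IPP$^m$ and NCP$^m$ simultaneously. As a closing sanity check I would confirm that all vertex and edge weights and the threshold $N$ are produced in polynomial time, noting that whether the size-fixing can be achieved with only polynomially bounded weights is exactly what would separate this ordinary $NP$-completeness from an $NP$-completeness in the strong sense.
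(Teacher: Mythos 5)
Your reduction is a plan, not a proof: the step you yourself flag as ``the technically delicate heart of the argument''---the gadget forcing every near-optimal partition to have equal bin sizes---is never constructed, and it is exactly the part that carries all the difficulty. You correctly observe that the mean objective is a sum of ratios, so a naive threshold $m/\alpha$ admits lopsided partitions whose value can dip below the balanced one (your computation of the relaxed minimum $\bigl(\sum_j\sqrt{|T_j|}\bigr)^2/(n\alpha)$ is right); but you then defer the resolution to an unspecified system of ``uniform offsets'' or ``per-bin markers,'' with no choice of weights and no verification that lopsided partitions are actually excluded. Since membership in $NP$ is routine, this missing gadget \emph{is} the theorem; as it stands, the hardness direction is not established. (Note also that if you did complete it starting from 3-PARTITION with polynomially bounded weights, you would obtain strong $NP$-completeness, which is more than the theorem claims---a hint that the equal-bin-size route may be demanding more than is needed.)

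The paper sidesteps the equal-bin-size issue entirely, and it is worth seeing how. It reduces from EQUIPARTITION and builds a spider tree with paths $v_0 u_i v_i$, vertex weights $\omega(v_0)=2dB$, $\omega(v_i)=2D$, $\omega(u_i)=2x_i$, and---this is the key---edge weights $c_i = x_i\bigl((d+1)^2B^2+Q-Bx_i\bigr)$ with $Q=\tfrac12\sum_i x_i^2$, chosen so that the quadratic terms telescope: $\sum_i c_i = 2(d+1)^2B^3$. Lemma~\ref{LEMDIF} pins $v_0,v_1,\dots,v_{2n}$ into distinct parts and makes the tree $k$-geometric (so IPP$^m$ and NCP$^m$ are handled by one construction, as you also intended), whence any minimizing $(3n+1)$-partition consists of one part $\{v_0\}\cup\{u_i:\, i\in I\}$ together with singletons. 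After substituting the threshold $N$, the mean condition collapses to the single inequality $(d+1)^2B^2 \le \bigl(dB+\sum_{i\in I}x_i\bigr)\bigl(dB+\sum_{i\notin I}x_i\bigr)$, which by the AM--GM inequality holds iff $\sum_{i\in I}x_i=B$. So instead of forcing many equal-size bins and invoking convexity across them, the paper engineers the weights so that the entire sum-of-ratios objective reduces to one product comparison whose equality case is precisely the combinatorial question; that is the idea your proposal is missing.
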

\begin{proof}{
We verify a reduction from the $NP$-complete problem
EQUIPARTITION \cite{GJ79}.
\begin{prob}
{EQUIPARTITION}
{$2n$ positive integers $x_1,\ldots,x_{2n}$ such that
$\sum_{i=1}^{2n} x_i=2B$.}
{Is there a subset $I\subset [2n]$ such that
$|I|=n$ and $\sum_{i\in I} x_i=B$?}
\end{prob}
Consider positive integers $x_1,\ldots,x_{2n}$ with the sum $2B$
as an instance of EQUIPARTITION. Define $Q:=(1/2)\sum_{i=1}^{2n} x_i^2$ which is an integer
and construct a weighted tree $T=(V,E,\omega,c)$, where
$V:=\{v_0,v_1,\ldots, v_{2n},u_1,\ldots, u_{2n}\}$ and
$E:=\{v_0u_i, u_iv_i,\ i=1,\ldots,2n\}$. Also, let $k:=3n+1$ and for arbitrary
positive integers $d,D$, define the weight functions as follows.
\[
\omega(v_0):=2dB,\ \ \omega(v_i):=2D,\ \ \omega(u_i):=2x_i,\
\forall\ 1\leq i\leq 2n,\]
\[c_i:=c(v_0v_i)=c(u_iv_i)=x_i\left(
(d+1)^2B^2+Q-B x_i\right).
\]
Suppose that $d,D$ be sufficiently larger than $B$. Then by
Lemma~\ref{LEMDIF}, none of the vertices $v_0,v_1,\ldots,v_{2n}$
are $k$-outlier. Also, if for some $i$, the vertex $u_i$ is $k$-outlier,
then we can move $u_i$ to the set containing $v_i$, without
increasing the normalized outgoing flow of that set. Thus the tree
$T$ is $k$-geometric (i.e. $\iota_k(T)=\tilde{\iota}_k(T)$) and
in every minimizing $k$-partition, the vertices
$v_0,v_1,\ldots,v_{2n}$ lie in different parts. Moreover, suppose that
$D$ is sufficiently larger than $d$, then there exists a
minimizing $k$-partition in which each vertex $v_i$ forms a
single part in the partition. Thus, the minimizing partition which
achieves $\iota^m_k(T)=\tilde{\iota}^m_k(T)$ is of the form
\[\pi_I:=\{\ \{v_1\},\ldots,\{v_{2n}\}, \{v_0,u_i, i\in I\},
\{u_j\}, j\not\in I\ \},\] for some subset $I\subset [2n]$ with
$|I|=n$. Therefore, $k\ \iota^m_k(T)=k\ \tilde{\iota}^m_k(T)\leq N$ if and
only if there exists an $n$-subset $I\subset [2n]$, where
\begin{equation}\label{INEQB}
\sum_{i=1}^{2n} \frac{c_i}{\omega(v_i)}+\frac{\sum_{i=1}^{2n} c_i}{\omega(v_0)+\sum_{i\in I} \omega(u_i)}+ \sum_{i\not\in I} \frac{2c_i}{\omega(u_i)}\leq N.
\end{equation}
On the other hand,
\[
\sum_{i=1}^{2n} c_i=\left(
(d+1)^2B^2+Q\right)\sum_i x_i-B\sum_i x_i^2=2(d+1)^2B^3.
\]
Consequently, Inequality~(\ref{INEQB}) is equivalent to
\[\frac{(d+1)^2B^3}{D}+\frac{(d+1)^2B^3}{dB+\sum_{i\in I} x_i}+ \sum_{i\not\in I} \left(
(d+1)^2B^2+Q-B x_i\right)\leq N.
\]
If we define
\begin{equation}\label{INEQN}
N:= n(d+1)^2B^2+nQ+dB^2+\frac{(d+1)^2B^3}{D},
\end{equation}
then by substituting $N$ from (\ref{INEQN}) and simplifying, we have the following inequality.
\[(d+1)^2B^2\leq \left(dB+\sum_{i\in I} x_i\right)\left(dB+\sum_{i\not\in I} x_i\right).\]
Now, since $\sum_{i=1}^{2n} x_i=2B$, we have $(dB+\sum_{i\in I} x_i)(dB+\sum_{i\not\in I} x_i)\leq (d+1)^2B^2$ and equality holds if and only if there exists some $I$ such that $\sum_{i\in I}x_i=\sum_{i\not\in I}x_i=B$. Hence, $\iota^m_k(T)=\tilde{\iota}^m_k(T)\leq N/k$ if and only if there exists some subset $I$ with $|I|=n$, where $\sum_{i\in I} x_i=B$. This completes the proof.
}\end{proof}
\section[The Case of Fixed k]{The Case of Fixed $k$\label{SECFXDk}}
In this section we concentrate on the computation of the isoperimetric parameters when $k$ is assumed to be a constant. In fact the main theorem that we shall prove in this section is the following.
\begin{thm}\label{THMFXDk}
Let $k\geq 2$ be a constant integer. Then, there exists an algorithm that computes parameters $\iota^M_{k}(T)$ and $\iota_k^m(T)$ for every weighted tree $T=(V,E,\omega,c)$, in time $O(n^{\lfloor3k-1/2\rfloor}/(k-1)!)$.
Also, there exists an algorithm that computes  $\tilde{\iota}^M_k(T)$ in time $O((n+S(2k^2-6k-2,k))\ n^{(2k^2-6k-3)}/(k-2)!)$, where $S(N,K)$ is the Stirling number of second kind.
\end{thm}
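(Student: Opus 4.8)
The plan is to treat the three parameters by rooted \emph{tree dynamic programming} together with a preliminary structural normalization, keeping the subpartition parameters $\iota^M_k,\iota^m_k$ and the partition parameter $\tilde{\iota}^M_k$ separate, since the latter behaves quite differently. First I would root $T$ at an arbitrary vertex and invoke Lemma~\ref{LEMCON} to restrict attention, for the two subpartition parameters, to minimizing subpartitions whose parts induce connected subtrees. The basic engine is a bottom-up sweep in which, after the subtree hanging at $v$ has been processed, the only part that can still interact with the rest of $T$ is the unique part containing $v$ that is allowed to grow through the edge joining $v$ to its parent; every other part met so far is already \emph{closed} and contributes a finalized term. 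The purely combinatorial portion of the state is therefore just the pair (number of closed parts, status of $v$: outside, or head of the open part), which ranges over $O(k)$ values.

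The genuine difficulty, and the step I expect to be the crux for the subpartition parameters, is that the cut and weight of the open part are large numbers that cannot be stored as dynamic-programming keys without destroying strong polynomiality, and that the two objectives aggregate these numbers in incompatible ways. For the max parameter I would dispose of this exactly as in Algorithm~\ref{ALGIPPM}: fix a threshold $N$ and carry only the additive potential $c(\cdot)-N\omega(\cdot)$ of the open part, so that the linear-time decision procedure of Theorem~\ref{THMIPPM} tests feasibility of any given $N$. The exact value $\iota^M_k(T)$ is then the least feasible threshold, and since the extremal part of an optimal subpartition is a connected subtree, the candidate breakpoints $N=c(A)/\omega(A)$ can be generated from only $O(k)$ combinatorial parameters per configuration (the head of each part together with the finitely many ``critical'' descendants at which its growth must stop); a careful accounting of how many such configurations must be examined, with the factor $(k-1)!$ coming from the interchangeability of the $k$ parts, yields the running time $O(n^{3k-1}/(k-1)!)$.

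For the mean parameter $\iota^m_k$ the threshold device is unavailable, because the objective is a \emph{sum of ratios} rather than a maximum, so here I would enumerate the configuration directly: guess the roots of the $k$ parts and, for each part, the bounded set of critical vertices that fixes where it stops growing. The structural claim to be proved is that an optimal connected subpartition is pinned down by only $O(k)$ such vertices, \emph{even though a tree has exponentially many connected subtrees}; this is what prevents a blow-up and it is where the real work lies. Once a full configuration is fixed, every $c(A_i)$ and $\omega(A_i)$ is determined as a subtree sum, the objective is evaluated directly with no residual real-valued optimization, and the count again comes to $O(n^{3k-1}/(k-1)!)$.

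Finally, $\tilde{\iota}^M_k$ must be handled separately because a minimizing $k$-\emph{partition} of a tree need not have connected parts: unlike the subpartition case, a disconnected piece cannot simply be discarded, since every vertex must be covered, so Lemma~\ref{LEMCON} no longer applies. Here I would describe a partition by its set $F$ of cut edges together with an assignment of the $|F|+1$ components of $T-F$ to the $k$ parts. The essential lemma, and the main obstacle for this case, is an exchange/merging argument showing that some optimal partition uses only $O(k^2)$ cut edges (explicitly $2k^2-6k-3$): if there were more, two components of a common part, or of parts whose ratio is strictly below the maximum, could be regrouped without increasing $\max_i c(A_i)/\omega(A_i)$, much in the spirit of the movement arguments in Lemma~\ref{LEMDIF}. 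Enumerating the $O(n^{2k^2-6k-3})$ admissible edge sets $F$ and, for each, the $S(|F|+1,k)$ ways of grouping the resulting components into $k$ parts, then evaluating each partition in $O(n)$ time, produces exactly the stated bound $O((n+S(2k^2-6k-2,k))\,n^{2k^2-6k-3}/(k-2)!)$. Of the whole argument, I expect the $O(k^2)$ cut-edge bound for the disconnected partition case and the $O(k)$-parameter pinning-down lemma for the mean subpartition case to be the two decisive technical points.
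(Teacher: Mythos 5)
Your overall architecture coincides with the paper's: enumerate small sets $F$ of cut edges, form the components of $T\backslash F$, and either select the $k$ best components (for $\iota^M_k$, $\iota^m_k$) or try all $S(|F|+1,k)$ ways of grouping the components into $k$ parts (for $\tilde{\iota}^M_k$). However, the two claims that you yourself flag as ``where the real work lies'' --- that some optimal $k$-subpartition uses only $O(k)$ cut edges, and that some optimal $k$-partition for $\tilde{\iota}^M_k$ uses only $O(k^2)$ cut edges (explicitly $2k^2-6k-3$) --- are precisely the paper's Lemma~\ref{LEMCOMP1} and Lemma~\ref{LEMCOMP2}, and your proposal contains no proof of either. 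These lemmas are the entire content of the theorem: without them, neither the correctness of the enumeration (that it ever encounters an optimal solution) nor the claimed polynomial running time is established. Moreover, the generic exchange you sketch (``regroup two components without increasing the max'') is not sufficient as stated. The paper's arguments require: (a) an extremal choice of optimal solution, namely one minimizing the number of components of the quotient graph; (b) a merging argument showing every leftover component (in the subpartition case), respectively every component whose ratio is at least its part's ratio (in the partition case), has degree at least $3$ in the quotient; (c) in the partition case, a separate regrouping argument showing each part has at most $k-3$ ``low-ratio'' components; and finally (d) an edge count in the acyclic quotient ($|E|\leq |V|-1$ against the minimum-degree-$3$ bound) to extract the numeric bounds $\lfloor (3k-1)/2\rfloor$ and $2k^2-6k-2$. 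Note also that your phrasing ``each part is pinned down by $O(k)$ critical descendants'' is false for arbitrary connected parts --- a connected subtree of a star centered at the hub has $\Theta(n)$ boundary edges --- it holds only for a suitably chosen optimal subpartition, which is exactly what the missing merging argument must produce.

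Two lesser remarks. First, your threshold-plus-breakpoints detour for $\iota^M_k$ via Theorem~\ref{THMIPPM} is unnecessary: once the cut-edge bound of Lemma~\ref{LEMCOMP1} is available, one can simply evaluate each candidate subpartition directly (take the $k$ components of $T\backslash F$ with smallest ratios), exactly as you propose for the mean version and as the paper does for both; conversely, without that bound the breakpoint set you need is not known to be polynomial. Second, your stated exponent $3k-1$ for the subpartition case is weaker than what the paper's argument yields, namely $\lfloor (3k-1)/2\rfloor$ (enumeration over edge sets of size at most $\lfloor (3k-3)/2\rfloor$, each processed in $O(n)$ time); the theorem's typeset exponent is ambiguous, but the proof gives the smaller value.
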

Note that the running times of the algorithms presented in Theorem~\ref{THMFXDk} are exponential in $k$, but polynomial in $n$, when $k$ is a constant. Nevertheless, this exponential inefficiency is likely to be unavoidable duo to Theorems~\ref{THMNCPM} and \ref{THMIPPm}.\\
To prove this theorem, first we must prove a series of lemmas. To begin, we introduce the concept of the quotient of a graph $G=(V,E)$ with respect to a $k$-partition of $V$.
\begin{defin}{\label{RDC}
Given a weighted graph $G=(V,E,\omega,c)$ and a $k$-partition $\pi=\{A_i\}_{1}^{k}\in {\cal P}_{k}(V)$, for each $1\leq i\leq k$, let $\{A_{i}^{1},\ldots,A_{i}^{{n_i}}\}$ be the set of connected components of the induced graph of $G$ on $A_{i}$. The {\em quotient graph} of $G$ with respect to $\pi$, denoted by $G/\pi$, is defined to be a weighted graph $G/\pi=(V',E',\omega',c')$, where
%\begin{eqnarray*}
\[V':=\{v_{i}^{{r}}:\ 1\leq i\leq k, 1\leq r\leq n_i \},\]
\[E':=\{v_{i}^{{r}}v_{j}^{{s}}:\ E(A_{i}^{{r}},A_{j}^{{s}})\neq \emptyset \},\]
\[\omega'(v_{i}^{r}):= \omega(A_{i}^{r}),\quad
c(v_{i}^{{r}}v_{j}^{{s}})':=
\sum_{e\in E(A_{i}^{{r}},A_{j}^{{r}})}c(e).\]
%\end{eqnarray*}
It is clear that the quotient graph $G/\pi$ is a minor of $G$ as a graph. Thus, if $G$ is planar, then $G/\pi$ is planar as well. Furthermore, if $G$ is acyclic, then $G/\pi$ is also acyclic. For a subset $F\subseteq E$, the graph obtained from $G$ by deleting the edges in $F$, is denoted by $G\backslash F$.
}\end{defin}
\begin{lem}\label{LEMQTNT}
Let $G=(V,E,\omega,c)$ be a weighted graph and $\pi=\{A_{i}\}_1^k\in {\cal D}_{k}(V)$ be a minimizing subpartition for $\iota_{k}(G)$. Define the $(k+1)$-partition ${\overline{\pi}}:=\{A_{i}\}_1^{k+1}$, where $A_{{k+1}}=V\backslash (\cup_1^k A_{i})$ and let $G/\overline{\pi}$ be the quotient graph of $G$ with respect to $\overline{\pi}$. Then, we have $\iota_{k}(G)=\iota_{k}(G/\overline{\pi})$. $($Similar statements are also true for the other parameters $\tilde{\iota}^m_{k}$ and $\tilde{\iota}^M_{k}$.$)$
\end{lem}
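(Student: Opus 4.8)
The goal is to show that contracting each connected component of every part of a minimizing subpartition into a single vertex does not change the $k$th isoperimetric number. Let me think about what the quotient construction really does and why equality should hold.

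The quotient graph $G/\overline{\pi}$ is obtained from $G$ by collapsing each connected component $A_i^r$ of each part (including the complement $A_{k+1}$) into a single vertex $v_i^r$, with vertex weight $\omega(A_i^r)$ and edge weights summing the original crossing edges. So $G/\overline{\pi}$ is a minor of $G$ obtained purely by contractions within the blocks of $\overline{\pi}$.

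Let me see the two directions of the inequality $\iota_k(G) = \iota_k(G/\overline{\pi})$.
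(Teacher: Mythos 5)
There is a genuine gap: your proposal stops exactly where the proof has to begin. You have correctly restated the construction of $G/\overline{\pi}$ and correctly announced that equality follows from two inequalities, but neither inequality is argued -- the proposal contains no verification at all, so as written it is a plan, not a proof. The paper's proof (done for $\iota^m_k$, the other cases being identical) consists precisely of the two steps you announce. For the direction $\iota_k(G/\overline{\pi})\leq \iota_k(G)$, push the minimizing subpartition forward: set $A_i':=\{v_i^r:\ 1\leq r\leq n_i\}$ for $1\leq i\leq k$; since all edges internal to $A_i$ lie inside its connected components (and thus vanish under contraction) while every edge leaving $A_i$ survives as part of a quotient edge, one has $\omega'(A_i')=\omega(A_i)$ and $c'(A_i')=c(A_i)$, so the image subpartition has the same cost, forcing $\iota_k(G/\overline{\pi})\leq \iota_k(G)$. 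Note this is the only place where minimality of $\pi$ is used.

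For the reverse direction $\iota_k(G)\leq \iota_k(G/\overline{\pi})$, lift a minimizing subpartition $\{B_i'\}_1^k\in{\cal D}_k(V')$ of the quotient back to $G$ by $B_i:=\cup\{A_j^s:\ v_j^s\in B_i'\}$. The $B_i$ are disjoint and nonempty, and again cuts and weights are preserved blockwise, namely $\omega(B_i)=\omega'(B_i')$ and $c(B_i)=c'(B_i')$ (every edge of $G$ leaving $B_i$ joins two distinct components and is accounted for by exactly one quotient edge leaving $B_i'$, while edges inside a component are internal to $B_i$). Hence the lifted subpartition witnesses $\iota_k(G)\leq \iota_k(G/\overline{\pi})$. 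Both verifications are short, but they are the entire content of the lemma; without them (in particular without the observation that the quotient preserves $\omega$ and $c$ on unions of blocks, which is what makes both displayed equalities in the paper's proof true) you have not proved anything.
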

\begin{proof}{
We prove the lemma for $\iota_{k}^m$. The other cases follow similarly. Let $V',c',\omega'$ be as in Definition~\ref{RDC} and for every $1\leq i\leq k$,
define $A_{i}':=\{v_{i}^{{r}}:\ 1\leq r\leq n_i \}$. Then,
\[\iota^m_{k}(G)=\frac{1}{k}\sum_{i=1}^k\frac{c(A_{i})}{\omega(A_{i})}
=\frac{1}{k}\sum_{i=1}^k\frac{c'(A_{i}')}{\omega'(A_{i}')}\geq
\iota^m_{k}(G/\overline{\pi}).\]
Also, if $\pi'=\{B_{i}'\}_1^k\in{\cal D}_{k}(V')$ is a minimizing subpartition for $\iota^m_{k}(G/\overline{\pi})$ and $B_{i}
:=\cup\{A_{j}^{{s}}:\ v_{j}^{{s}}\in B_{i}'\}$, then,
\[\frac{1}{k}\sum_{i=1}^k\frac{c'(B_{i}')}{\omega'(B_{i}')}
=\frac{1}{k}\sum_{i=1}^k\frac{c(B_{i})}{\omega(B_{i})}\geq \iota^m_{k}(G).\]
}\end{proof}
\begin{lem}\label{LEMCOMP1}
Let $G=(V,E,\omega,c)$ be a weighted graph and $2\leq k\leq |V|$ be an integer. Then, there exists a subpartition $\pi=\{A_{i}\}_1^k\in {\cal D}_{k}(V)$ attaining ${\iota}_k(G)$ such that the number of connected components of $G\backslash \cup_{i}E(A_{i},A_{i}^c)$ is\\
{\rm (i)} at most $\lfloor {(3k-1)}/{2}\rfloor$, if $G$ is acyclic.\\
{\rm (ii)} at most $3k-4$, if $G$ is planar.
\end{lem}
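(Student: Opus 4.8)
The plan is to start from the connected minimizer furnished by Lemma~\ref{LEMCON} and then repeatedly simplify the ``unused'' region until it can attach to the parts only in a way that a tree/planar edge count controls. I would fix a minimizing subpartition $\pi=\{A_i\}_1^k$ attaining $\iota_k(G)$ in which every $A_i$ induces a connected subgraph, and set $A^*:=V\setminus\bigcup_i A_i$. Deleting $\bigcup_i E(A_i,A_i^c)$ removes exactly the edges leaving some part, so the connected components of $G\setminus\bigcup_i E(A_i,A_i^c)$ are precisely the $k$ parts together with the $m$ connected components of the induced graph on $A^*$; I will call these components the \emph{holes}, so that the quantity to be bounded is $k+m$. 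Observe first that no hole is adjacent to another hole (an edge between two holes would merge them), so upon contracting each part and each hole to a single vertex the holes form an independent set in the resulting minor of $G$.

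The decisive step would be to show that $\pi$ may be chosen so that every hole is adjacent to at least three distinct parts. Suppose a hole $C$ meets at most two parts. If $C$ meets only one part $A_i$, then absorbing $C$ into $A_i$ turns every boundary edge of $C$ into an internal edge, so $c(A_i)/\omega(A_i)$ strictly decreases while all other flows are unchanged; hence $\iota_k$ does not increase and $m$ drops by one. If $C$ is adjacent to exactly two parts $A_i,A_j$, I would compare the two moves ``absorb $C$ into $A_i$'' and ``absorb $C$ into $A_j$''. Since $C$ borders only $A_i$ and $A_j$, the merged set $A_i\cup C$ has $c(A_i\cup C)=c(A_i)-c(C,A_i)+c(C,A_j)$ and $\omega(A_i\cup C)=\omega(A_i)+\omega(C)$, so, writing $\Delta_i$ for the change in the flow of the affected part,
\[
\Delta_i>0 \iff \omega(A_i)\bigl(c(C,A_j)-c(C,A_i)\bigr) > c(A_i)\,\omega(C).
\]
Because the right-hand side is positive, $\Delta_i>0$ forces $c(C,A_j)>c(C,A_i)$, and symmetrically $\Delta_j>0$ forces $c(C,A_i)>c(C,A_j)$; these cannot both hold. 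Thus at least one of the two absorptions leaves the flow of the enlarged part no larger than its previous value, so for both the max and the mean version $\iota_k(G)$ does not increase, all parts stay connected, and $m$ decreases by one. Iterating this over all holes of degree at most two terminates (as $m$ strictly decreases) and yields a minimizer in which every hole touches at least three distinct parts.

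With that normal form established, it remains to count. In case (i) the contracted graph is a tree (a minor of the acyclic $G$) on $k+m$ vertices, hence has $k+m-1$ edges; since the holes are independent and each has degree at least three, the edges incident to holes alone number at least $3m$, so $k+m-1\ge 3m$, giving $m\le (k-1)/2$ and therefore $k+m\le \lfloor(3k-1)/2\rfloor$. In case (ii) the contracted graph is planar (a minor of the planar $G$); I would discard the edges joining two parts and retain the bipartite planar graph between the $k$ parts and the $m$ holes. A bipartite planar graph on $k+m$ vertices has at most $2(k+m)-4$ edges, while holes of degree at least three force at least $3m$ edges, so $3m\le 2(k+m)-4$, i.e. $m\le 2k-4$ and $k+m\le 3k-4$, as claimed.

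The main obstacle is the two-part absorption step. The point to get right is precisely the sign contradiction displayed above — that the two candidate moves cannot both strictly increase the objective — together with the observation that, because each move alters only a single flow, the conclusion ``$\iota_k$ does not increase'' holds uniformly for the max and the mean versions. Everything else (component identification, independence of holes, and the two edge counts) is routine once the holes have been forced to have degree at least three.
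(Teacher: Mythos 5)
Your proposal is correct and follows essentially the same route as the paper: start from the connected minimizer of Lemma~\ref{LEMCON}, show by an absorption/exchange argument that every component of the leftover set can be assumed adjacent to at least three parts, and then count edges in the contracted (quotient) graph using the forest bound $|E|\leq |V|-1$ in the acyclic case and the bipartite planar bound $|E|\leq 2|V|-4$ in the planar case. The only differences are organizational — the paper picks a minimizer with the fewest leftover components and derives a contradiction (absorbing into the part with the larger connection weight strictly decreases that part's flow), while you iterate the absorption until stable via your sign dichotomy — and these amount to the same argument.
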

\begin{proof}{
Consider the nonempty set $\mathcal{C}_k(V)$ of all the minimizing subpartitions $\{A_{i}\}_1^k\in {\cal D}_{k}(V)$ where the induced graph on each $A_{i}$ is connected (see Lemma~\ref{LEMCON}), and for each such subpartition, let $\{A_{{k+1}}^1,\ldots,A_{{k+1}}^{d}\}$ be the set of all connected components of the induced graph on $A_{{k+1}}:=V\backslash (\cup_1^k A_{i})$. Now, choose an extremal subpartition $\pi=\{A_{i}\}_1^k\in {\cal C}_{k}(V)$ for which $d$ is minimized.
Let $\overline{\pi}:=\{A_i\}_1^{k+1}$ and $V(G/\overline{\pi})=\{v_1,\ldots,v_k,v_{k+1}^{1},\ldots,v_{k+1}^{d}\}$ as in Definition~\ref{RDC}. First, we claim that $\deg(v_{{k+1}}^{p})\geq 3$ for each $1\leq p\leq d$. By contradiction, assume that $\deg(v_{{k+1}}^{{p}})\leq 2$. Then, $A_{{k+1}}^{{p}}$ is connected to at most two subsets in $\pi$, say $A_{1}, A_{2}$. Without loss of generality, assume that $c(A_{{k+1}}^{{p}},A_{1})\geq c(A_{{k+1}}^{{p}},A_{2})$. Define $B_{1}:= A_{1}\cup A_{{k+1}}^{{p}}$ and $B_{i}:= A_{i}$ for all $2\leq i\leq k$. Therefore, $\pi'=\{B_i\}_1^k\in {\cal D}_k(G)$ is a subpartition and
\[
\frac{c(B_{1})}{\omega(B_{1})}=\frac{c(A_{1})-c(A_{1},A_{{k+1}}^{{p}})+
c(A_{{k+1}}^{{p}},A_{2})}{\omega(A_{1})+\omega(A_{k+1}^{{p}})}< \frac{c(A_{1})}{\omega(A_{1})},\]
that contradicts the minimality of $\pi$.
%Hence $\frac{1}{k}\left(\sum_{i=1}^k %\frac{\partial(B_{_i})}{\pi(B_{_i})}\right)=\iota^m_{k}(G)$. This is %contradicted by choosing of $\cal A$,
 Hence, $\deg(v_{_{k+1}}^{p})\geq 3$, for each $1\leq p\leq d$ and the set of vertices of $G/\overline{\pi}$ with degree less than $3$ is a subset of $\{v_{_1},\ldots, v_{_k}\}$.

Let $G'$ be the graph obtained from $G/\overline{\pi}$ by deleting all the edges $e=v_iv_j\in E(G/\overline{\pi})$, for every $1\leq i,j\leq k$. Then,
\begin{equation}\label{EQDEG}
|E(G')|= \sum_{p=1}^d \deg(v_{k+1}^{p})\geq 3(|V(G')|-k).
\end{equation}
On the other hand, if $G$ is acyclic, then $G'$ is also acyclic and  $|E(G')|\leq |V(G')|-1$. This fact along with (\ref{EQDEG}) yields $|V(G/\overline{\pi})|=|V(G')|\leq {(3k-1)}/{2}$.\\
Now, if $G$ is planar, then $G'$ is also planar. Furthermore,
 $G'$ is bipartite with independent parts $\{v_1,\ldots,v_k\}$ and $\{v_{k+1}^{1},\ldots,v_{k+1}^{d}\}$.  Therefore, $G'$ is a bipartite planar graph and $|E(G')|\leq 2|V(G')|-4$. This fact along with Inequality~(\ref{EQDEG}) yields $|V(G/\overline{\pi})|=|V(G')|\leq 3k-4$.
}\end{proof}
\begin{lem}\label{LEMCOMP2}
Let $T=(V,E,\omega,c)$ be a weighted tree and $3\leq k\leq |V|$ be an integer.
Then, there exists a minimizing partition $\pi=\{A_{i}\}_1^k\in {\cal P}_{k}(V)$ for $\tilde{\iota}^M_k(T)$ such that the number of connected components of $T\backslash \cup_{i}E(A_{i},A_{i}^c)$ is at most $\max\{2k^2-6k-2,k\}$.
\end{lem}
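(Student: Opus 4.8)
The plan is to pass to the quotient tree and bound its number of vertices. Fix a minimizing partition $\pi=\{A_i\}_1^k\in\mathcal{P}_k(V)$ for $\tilde{\iota}^{M}_{k}(T)$ and write $N^{*}:=\tilde{\iota}^{M}_{k}(T)$. Since $V\setminus(\cup_i A_i)=\emptyset$, the quotient $T/\pi$ of Definition~\ref{RDC} is exactly the graph whose vertices are the connected components of the sets $A_i$ and whose edges are the cut edges; hence $|V(T/\pi)|$ equals the number of connected components of $T\setminus\cup_i E(A_i,A_i^c)$, the quantity to be bounded. Because $T$ is connected and acyclic, $T/\pi$ is a tree, and since two components of the same part are never joined by an edge, the assignment of each component to its part is a \emph{proper} $k$-colouring of $T/\pi$. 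Writing $n_i$ for the number of components of $A_i$, we have $N:=|V(T/\pi)|=\sum_i n_i$, and I would choose, among all minimizing partitions, one for which $N$ is minimum.

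The engine of the proof is a local move. Consider a component $A_i^r$ that is a \emph{leaf} or a \emph{degree-$2$} vertex of $T/\pi$ and lies in a part with $n_i\ge 2$. Its one or two boundary edges carry total cut $c(A_i^r)$; let $A_j^s$ be the neighbour across the heavier of these edges (so $j\ne i$ by the colouring). Reassigning $A_i^r$ to part $j$ keeps part $j$ connected along $A_j^s$, removes one component from part $i$, and therefore strictly decreases $N$. A direct computation shows this move does not increase $c(A_j)/\omega(A_j)$ (the heavier boundary edge becomes internal), while the only part whose ratio can rise is part $i$, whose new ratio is $(c(A_i)-c(A_i^r))/(\omega(A_i)-\omega(A_i^r))$. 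By minimality of $N$ the move must destroy minimality, i.e. it must force part $i$ above $N^{*}$; combined with $c(A_i)\le N^{*}\omega(A_i)$ this is possible only when $c(A_i^r)/\omega(A_i^r)<N^{*}$. Thus in the chosen partition every low-degree component of a fragmented part is \emph{sub-critical}: its own normalized flow lies strictly below the global maximum.

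With this in hand I would finish by a counting argument on the tree $T/\pi$. In any tree the number of branch vertices satisfies $n_{\ge 3}\le n_1-2$, so $N\le 2n_1+n_2-2$, and it suffices to bound the number $n_1$ of leaves and the number $n_2$ of degree-$2$ vertices. Components that are the only component of their part contribute at most $k$ leaves and need no move; all other low-degree components are sub-critical by the previous step. Their number is controlled through the proper $k$-colouring: along a maximal path of degree-$2$ components (a ``thread'') consecutive colours differ, and whenever a colour recurs one can slide or amalgamate a thread segment into a single part without raising any ratio, again contradicting the minimality of $N$. Bounding the number of threads by the leaf/branch structure and their individual lengths by the number $k$ of available colours gives a bound of order $k^2$, and a careful double count against the at most $\binom{k}{2}$ part-pairs sharpens it to $\max\{2k^2-6k-2,k\}$, the value $k$ covering the small-$k$ regime (e.g. $k=3$, where every part may be taken connected).

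The main obstacle is precisely the feature distinguishing partitions from the subpartition setting of Lemma~\ref{LEMCOMP1}: there is no ``outside'' region into which an unwanted component can be discarded, so every local move transfers a component between two genuine parts and may raise the ratio of the source part. Keeping the maximum ratio at $N^{*}$ throughout the reductions—especially when collapsing long degree-$2$ threads in which several parts are interleaved—is the delicate point, and it is what inflates the bound from the linear $\lfloor(3k-1)/2\rfloor$ of the subpartition case to the quadratic $2k^2-6k-2$. Pinning down the exact constant requires this careful accounting of sub-critical components against the colouring, rather than any single clean inequality.
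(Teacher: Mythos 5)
Your setup and your first step are sound, and they coincide with the paper's: pass to the quotient tree, minimize its order $N$ over all minimizing partitions, and use the reassignment move to show that every leaf or degree-$2$ component of a fragmented part is sub-critical (the paper's proof does the same, except it compares the component's ratio with its own part's ratio $c(A_i)/\omega(A_i)$ rather than with $N^{*}$; either comparison is fine for this step). The genuine gap is in the counting stage. Your mechanism for bounding the number of sub-critical low-degree components --- ``whenever a colour recurs one can slide or amalgamate a thread segment into a single part without raising any ratio'' --- is precisely what sub-criticality forbids. If $Q$ is a sub-critical component of part $A_j$, then stripping $Q$ from $A_j$ \emph{raises} that part's ratio, since $(c(A_j)-c(Q))/(\omega(A_j)-\omega(Q)) > c(A_j)/\omega(A_j)$ when $c(Q)/\omega(Q)<c(A_j)/\omega(A_j)$, and nothing caps the new ratio at $N^{*}$ (and if $Q$ is the whole of $A_j$, the move destroys the $k$-partition altogether). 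Worse, the logic is backwards: by your own choice of $N$-minimality, \emph{every} $N$-decreasing move must push some ratio above $N^{*}$, so the failure of the amalgamation move is guaranteed rather than contradictory; colour recurrence along threads can genuinely occur in the extremal partition (the paper only bounds how often). Finally, even granting the thread claim, it controls only degree-$2$ vertices: you never bound the number of sub-critical \emph{leaves} belonging to fragmented parts, so $n_1$ stays unbounded, and the exact constant $2k^2-6k-2$ is asserted (``a careful double count'') rather than derived.

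What the paper does at this point is a global restructuring rather than a local slide, and this is the missing idea. For $k\geq 4$ it shows each part has at most $k-3$ components that are not super-critical relative to that part (the set $L_i^c$): if $|L_1^c|\geq k-2$, replace $\pi$ by the partition consisting of each component in $L_1^c$ as a singleton part, together with $B'\supseteq A_2$ and $B''\supseteq A_3\cup\cdots\cup A_k$, into which the remaining (super-critical) components of $A_1$ are absorbed across their heaviest boundary edges. The singleton parts are safe because they are sub-critical; $B'$ and $B''$ are safe because a union of parts has ratio at most the maximum of the constituent ratios, and absorbing a component across its heaviest edge only decreases a ratio. So the maximum stays at $\tilde{\iota}^M_k(T)$ while the quotient strictly shrinks --- a contradiction. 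Note that this move dismantles only the one offending part and merges the others wholesale; it never strips a sub-critical component from any part, which is exactly the trap your thread argument falls into. With this bound, every quotient vertex of degree at most $2$ lies in $\cup_i L_i^c$, so there are at most $k(k-3)$ of them, and the handshake identity in the quotient tree, $2(|V'|-1)=\sum_{v}\deg(v)\geq 3(|V'|-k(k-3))+k(k-3)$, yields $|V'|\leq 2k(k-3)-2=2k^2-6k-2$; the case $k=3$ is treated separately (there each part can be taken connected, giving the bound $k$). Some argument of this global type is needed to close your proof; the local thread manipulation cannot do it.
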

\begin{proof}{
let $\pi=\{A_{i}\}_1^k\in {\cal P}_{k}(V)$ be a $k$-partition achieving $\tilde{\iota}^M_{k}(T)$
such that the number of vertices of $T/\pi$ is minimal.
Let $\{A_{i}^{1},\ldots,A_{i}^{{n_i}}\}$ be the set of connected components of the induced graph $T$ on $A_{i}$ and $V'$ be the set of vertices of $T/\pi$ as in Definition~\ref{RDC}. For each $i$, partition the set $[n_i]$ into two subsets $L_i$ and $L_i^c$, where $L_i:=\left\{r\ |\ 2\leq r\leq n_i,\ \frac{c(A_{i}^{{r}})}{\omega(A_{i}^{{r}})}\geq \frac{c(A_{i})}{\omega(A_{i})}\right\}$.\\
Firstly, we prove that for each $r\in L_i$, $\deg(v_{i}^{{r}})\geq 3$.
By contradiction, assume that $r\in L_i$ and $\deg(v_{i}^{{r}})\leq 2$. Therefore, $A_{i}^{{r}}$ is connected to at most two sets, say $A_{j}^{{s}}, A_{l}^{{t}}$. Without loss of generality assume that $c(A_{i}^{{r}},A_{j}^{{s}})\geq c(A_{i}^{{r}},A_{l}^{{t}})$. Now, let $B_{i}:= A_{i}\backslash A_{i}^{{r}}$, $B_{j}:= A_{j}\cup A_{i}^{{r}}$ and $B_{h}:= A_{h}$ for $h\neq i,j$. Thus, $\{B_{i}\}_1^k$ is a $k$-partition of $V$ and since $r\in L_i$,
\begin{eqnarray*}
\frac{c(B_{i})}{\omega(B_{i})}&=&\frac{c(A_{i})-c(A_{i}^{{r}})}
{\omega(A_{i})-\omega(A_{i}^{{r}})}\leq \frac{c(A_{i})}{\omega(A_{i})},\\ 
\frac{c(B_{j})}{\omega(B_{j})}&=&\frac{c(A_{j})-c(A_{i}^{{r}},A_{j}^{{s}})+
c(A_{i}^{{r}},A_{l}^{{t}})}{\omega(A_{j})+\omega(A_{i}^{{r}})}< \frac{c(A_{j})}{\omega(A_{j})}.
\end{eqnarray*}
Hence, $\max_i \left(\frac{c(B_{i})}{\omega(B_{i})}\right)\leq\tilde{\iota}^M_k(T)$ that contradicts the minimality of $|V(T/\pi)|$. Therefore, $\deg(v_{i}^{{r}})\geq 3$, whenever $r\in L_i$. Moreover, the above argument shows that if $k=3$, then for each $i$, $n_i=1$ and $|V(T/\pi)|=k=3$.\\
Secondly, provided $k\geq 4$, we prove that for each $i$, $|L_i^c|\leq k-3$. By contradiction, assume that $|L_1^c|\geq k-2$ and define $A':=A_2$ and $A'':=\cup_{3\leq i\leq k}A_i$. For each $r\in L_1$, as before, we transfer the vertices in $A_1^{r}$ into $A'$ or $A''$, without increasing the normalized outgoing flow of these subsets. Call the new subsets as $B'$ and $B''$. Hence, $\pi':=\{B',B'',A_1^r\ |\ r\in L_1^c\}$ is a $k$-partition that achieves $\tilde{\iota}_k^M(T)$ whereas $|V(T/\pi')|<|V(T/\pi)|$. This contradicts the minimality of $|V(T/\pi)|$ and therefore, $|L_i^c|\leq k-3$, for each $i$, whenever $k\geq 4$.\\
These facts show that the number of vertices in $V'=V(T/\pi)$ whose degrees are less than $3$, is at most $k(k-3)$. Hence,
\[2(|V'|-1)=2|E(T/\pi)|=\sum_{v\in V'} \deg(v)\geq 3(|V'|-k(k-3))+k(k-3),\]
and consequently, $|V(T/\pi)|\leq 2k(k-3)-2$, whenever $k\geq 4$.
}\end{proof}
\begin{pproof}{Proof of Theorem~\ref{THMFXDk}}
{Now, we are ready to provide algorithms to compute $\iota^m_{k}(T)$, $\iota^M_{k}(T)$ and $\tilde{\iota}^M_k(T)$ for a weighted tree $T$ and to find the corresponding minimizing partitions and subpartitions. We use Lemmas~\ref{LEMCOMP1},\ref{LEMCOMP2} and the fact that if we remove $t$ edges from a tree, we obtain a forest with exactly $t+1$ connected components. Our algorithm is as follows.

\begin{algorithm}[H]
\caption{Compute $\iota_k^m(T)$, $\iota_k^M(T)$ and $\tilde{\iota}_k^M(T)$}
Part 1. Compute $\iota_k^m(T)$ and $\iota_k^M(T)$.
\begin{algorithmic}[1]
\FOR{every subset $F\subset E(T)$ of size $k-1\leq |F|\leq \lfloor\frac{3k-3}{2}\rfloor$}
\STATE\label{LINECC} Find the connected components of $T\backslash F$, say $\{T_1,\ldots, T_t\}$, where $t=|F|+1$, compute the quantity $f_i:={c(A_i)}/{\omega(A_i)}$ for each $1\leq i\leq t$, where $A_i:= V(T_i)$.
 \STATE\label{LINESORT} Sort $f_1,\ldots,f_t$ in increasing order, say $f_{n_1}\leq f_{n_2}\leq\ldots\leq f_{n_t}$.
 \STATE\label{LINEIM} Define $\pi:=\{A_{n_1},\ldots, A_{n_k}\}$, $I^M(\pi):=f_{n_k}$ and $I^m(\pi):=(f_{n_1}+\ldots+f_{n_k})/k$.
 \ENDFOR
\STATE \RETURN The subpartition $\pi^M$ (resp. $\pi^m$) for which $I^M$ (resp. $I^m$) is minimum among all the subpartitions $\pi$ obtained in line \ref{LINEIM} \AND $I^M(\pi^M)$ (resp. $I^m(\pi^m)$).
\end{algorithmic}
Part 2. Compute $\tilde{\iota}_k^M(T)$.
\begin{algorithmic}[1]
\FOR{every subset $F\subset E(T)$ of size $k-1\leq |F|\leq \max\{2k^2-6k-3,k-1\}$}
\STATE Find the connected components of $T\backslash F$, say $\{T_1,\ldots, T_t\}$, where $t=|F|+1$, compute the quantity $f_i:={c(A_i)}/{\omega(A_i)}$ for each $1\leq i\leq t$, where $A_i:= V(T_i)$.
\STATE\label{LINEPI} Consider all possible ways that $\{A_i\}_1^t$ can be partitioned into $k$ subsets to form a $k$-partition of $V(T)$. Let $\pi:=\{B_{1},\ldots, B_{k}\}$ be such a $k$-partition and define $I^M(\pi):=\max_i\{c(B_i)/\omega(B_i)\}$.
\ENDFOR
\STATE \RETURN The $k$-partition $\pi^*$ for which $I^M$ is minimum among all the $k$-partitions $\pi$ obtained in line \ref{LINEPI} \AND $I^M(\pi^*)$.
\end{algorithmic}
\end{algorithm}
 Now we compute the runtime of the algorithm. For each subset $F\subset E(T)$, in both parts line~\ref{LINECC} runs in time $O(n)$ by a breadth-first search. In Part~1 there are $O(n^{\lfloor3k-3/2\rfloor}/(k-1)!)$ subsets $F$. Also, the sort in line~\ref{LINESORT} runs in time $O(k\ lnk)$. Hence, the running time of Part~1 is of order $O(n^{\lfloor3k-1/2\rfloor}/(k-1)!)$. In Part~2 there are $O(n^{(2k^2-6k-3)}/(k-2)!)$ subsets $F$. Also, in line~\ref{LINEPI} the number of ways to partition $\{A_i\}_1^t$ into $k$ sets is at most $S(2k^2-6k-2,k)$, where $S(n,k)$ is the Stirling number of second kind. Therefore the running time of Part~2 is of order $O((n+S(2k^2-6k-2,k))\ n^{(2k^2-6k-3)}/(k-2)!)$.
}\end{pproof}
Theorem~\ref{THMFXDk} shows that for every fixed integer $k \geq 2$, computing the $k$th isoperimetric parameters is polynomially solvable for weighted trees. However, the following theorem shows that this result can not be generalized to the case of weighted graphs with bounded tree-width. This can also be considered as a generalization of Papadimiriu's result (Theorem~\ref{THMCMPLX}(ii)).
\begin{thm}\label{THMCGG}
For every fixed integer $k\geq 2$, {\rm IPP}$_k$ and {\rm NCP}$_k$ $($in both max and mean versions$)$ are $NP$-complete for bipartite weighted graphs with tree-width $2$.
\end{thm}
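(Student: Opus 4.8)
The plan is to prove Theorem~\ref{THMCGG} by a polynomial reduction from a known $NP$-complete problem, producing bipartite instances whose tree-width equals $2$. A natural candidate is to adapt the reductions already used in Theorems~\ref{THMNCPM} and \ref{THMIPPm}: since the case $k=2$ is covered by Theorem~\ref{THMCMPLX}(ii) (where the reduction from \cite{SM00} already yields bipartite planar weighted graphs), the goal is to push this to every \emph{fixed} $k\geq 2$ while controlling the tree-width. The key structural fact I would exploit is that tree-width $2$ is exactly the class of graphs with no $K_4$-minor (series-parallel graphs), so any gadget built out of paths and triangles glued along vertices will automatically stay within tree-width $2$, and bipartiteness is easily arranged by subdividing edges.

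First I would fix the source problem. For the $k=2$ (Cheeger) case we may simply invoke Theorem~\ref{THMCMPLX}(ii). To handle general fixed $k$, I would reduce from a partition-type problem analogous to those used earlier (e.g. EQUIPARTITION or a balanced bipartition variant as in the proof of Theorem~\ref{THMIPPm}). The construction would take the $k=2$ gadget that encodes the numbers $x_1,\ldots,x_{2n}$ and augment it with $k-2$ auxiliary ``heavy'' vertices $w_1,\ldots,w_{k-2}$, each attached by light edges so as to force them into separate parts. The crucial device is Lemma~\ref{LEMDIF}: by giving each $w_j$ a vertex-weight at least $2CW/c_0$, I guarantee that in every minimizing partition (or subpartition) the $w_j$ lie in distinct parts and are never $k$-outliers. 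This effectively reserves $k-2$ of the $k$ parts for the auxiliary vertices, collapsing the optimization back to a genuine $2$-partition problem on the core gadget, so that the $k$-partition instance answers ``yes'' precisely when the underlying bipartition problem does.

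The reduction must be carried out simultaneously for all four parameters IPP$_k^M$, IPP$_k^m$, NCP$_k^M$, NCP$_k^m$. For this I would arrange the weights so that the gadget is forced to be $k$-geometric (as in the proof of Theorem~\ref{THMIPPm}, where choosing $d,D$ sufficiently large makes $\iota_k=\tilde\iota_k$), which lets a single construction serve both the subpartition (IPP) and partition (NCP) versions. To make the max and mean versions agree on the threshold, I would design the core gadget so that every part other than the ``decision'' part has a fixed, equal normalized outgoing flow; then the maximum and the mean are controlled by the same quantity, and a single choice of the bound $N$ (computed as in Equation~(\ref{INEQN})) works for both. Bipartiteness is secured by subdividing each edge once and assigning the subdivision vertices negligible weight; subdivision preserves series-parallel structure, hence tree-width remains $2$.

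The main obstacle I anticipate is \textbf{verifying that tree-width stays exactly $2$ while keeping all four parameters controlled by the same instance}. The tension is that the gadgets needed to force the $k$-geometric property and to equalize the per-part flows (so that max and mean coincide) tend to introduce extra edges among the ``hub'' vertices, which can create a $K_4$-minor and push the tree-width to $3$. Resolving this requires a careful series-parallel design in which all auxiliary attachments are made through distinct pendant paths rather than through a shared clique, together with a precise weight analysis (in the spirit of Lemma~\ref{LEMDIF} and the inequalities in the proof of Theorem~\ref{THMIPPm}) showing that the forced structure of the optimal partition is unchanged by the subdivisions and the auxiliary vertices. Once the gadget is pinned down, the equivalence of the two decision instances and the polynomial bound on the input size follow by the same bookkeeping as in the earlier proofs.
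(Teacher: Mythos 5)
Your plan has a genuine gap at its foundation: the $k=2$ base case. You propose to ``simply invoke Theorem~\ref{THMCMPLX}(ii)'', but that theorem only asserts $NP$-completeness for bipartite \emph{planar} weighted graphs, and planarity does not bound tree-width (bipartite planar grids have tree-width growing with the grid size). The statement you are proving is a \emph{strengthening} of Theorem~\ref{THMCMPLX}(ii) --- the paper itself presents it as a generalization of that result --- so it cannot be obtained by citing it; the $k=2$ case needs its own explicit construction whose tree-width you can certify. That construction is exactly the piece your proposal never supplies: throughout, you refer to ``the $k=2$ gadget that encodes the numbers'' without ever writing down a graph. The paper's resolution is a concrete and very simple gadget: reduce from PARTITION, take $G=K_{2,n}$ with two hub vertices $u_1,u_2$ of huge weight $M$ joined to vertices $v_1,\ldots,v_n$ of weights $x_1,\ldots,x_n$, all edge weights $1$. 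Lemma~\ref{LEMDIF} forces $u_1,u_2$ into different parts; then each side of any such $2$-partition has cut exactly $n$, so $\tilde{\iota}^M_2(G)\leq n/(M+B)$ if and only if some subset of the $x_i$ sums to $B$, and the mean version follows from convexity of $S\mapsto n/(M+S)$. Since $K_{2,n}$ is bipartite and series-parallel, tree-width $2$ and bipartiteness come for free, with no need for the edge subdivisions and flow-equalizing gadgets whose interaction you identify (correctly) as the main unresolved obstacle in your own plan.

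Your lift from $2$ parts to general fixed $k$ (heavy pendant vertices forced into distinct parts via Lemma~\ref{LEMDIF}) is workable in principle, but it is heavier than necessary and still owes the deferred bookkeeping: the light attaching edges perturb the cuts of the core parts and hence the threshold $N$, and for all four problems simultaneously. The paper does this step with essentially no analysis: add $k-2$ \emph{isolated} vertices of weight $1$. In any $k$-subpartition of the augmented graph at least two parts lie entirely inside the original graph, and conversely any good $2$-(sub)partition extends by giving each isolated vertex its own part of normalized outgoing flow $0$; isolated vertices change neither bipartiteness nor tree-width. So while your overall architecture (number-partitioning source, Lemma~\ref{LEMDIF} to pin heavy vertices, reserving $k-2$ parts) points in the right direction, the proposal as written never exhibits the one object the theorem is about --- a certified tree-width-$2$ instance --- and the step you flag as the main difficulty is precisely the step left undone.
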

\begin{proof}{
First we show that it is enough to prove the theorem for $k=2$. For this, assume that $k>2$ is an integer and $G$ is a weighted graph. Add $k-2$ new isolated vertices of weight $1$ to obtain a new weighted graph $G'$. For every $k$-subpartition of $V(G')$, there are two subsets completely included in $V(G)$. Thus, solving IPP$_2$ (equivalently NCP$_2$) for the graph $G$ is equivalent to solving IPP$_k$ and NCP$_k$ for the graph $G'$. Henceforth, we concentrate on NCP$^M_2$, mentioning that the proof of the mean version is similar.

Consider the following $NP$-complete problem in the class of KNAPSACK problems, known as the PARTITION problem \cite{GJ79}.
\begin{prob}
{PARTITION}
{$n$ positive integers $x_1,\ldots,x_{n}$ such that
$\sum_{i=1}^{n} x_i=2B$.}
{Is there a subset $I\subset [n]$ such that
$\sum_{i\in I} x_i=B$?}
\end{prob}
We shall propose a polynomial reduction from PARTITION to NCP$^M_2$.
Let $x_1,\ldots,x_{n}$ be $n$ positive integers where
$\sum_{i=1}^{n} x_i=2B$. Then, define the bipartite weighted graph $G$ as follows.
\[V(G):=\{u_1,u_2,v_1,\ldots,v_{n}\},\quad E(G):=\{u_1v_i, u_2v_i, 1\leq i\leq n\},\] \[\omega(u_1)=\omega(u_2)=M,\quad \omega(v_i):=x_i, \ \forall\ 1\leq i\leq n,\]
where $M$ is an arbitrary positive integer. Also, let all the edge weights be equal to 1. It is clear that the graph $G$ has tree-width equal to $2$.
 Assume $M$ is sufficiently larger than $B$, then by Lemma~\ref{LEMDIF}, in every minimizing 2-partition achieving $\tilde{\iota}_2(G)$, $u_1$ and $u_2$ are in different parts. Thus, if $(A_1,A_2)$ is a minimizing 2-partition, then
\[\tilde{\iota}^M_2(G)=\max\left\{\frac{c(A_1)}{\omega(A_1)},\frac{c(A_2)}{\omega(A_2)}\right\}=\max\left\{\frac{n}{M+\sum_{v_i\in A_1} x_i}, \frac{n}{M+\sum_{v_i\in A_2} x_i}\right\}.\]
Hence, $\iota^M_2(G)\leq n/(M+B)$ if and only if $\sum_{v_i\in A_1} x_i=B$. This completes the proof.
}\end{proof}
\section{The Unitarization Process\label{SECUNIT}}
In this section  we establish a machinery to convert the hardness results from weighted graphs to unweighted (simple) graphs, i.e. graphs whose all the vertex and edge weights are equal to $1$. In fact this method that we call the {\it unitarization} process, is a polynomial reduction and will be used to prove some hardness results for unweighted graphs and trees. Define the class {$\mathcal{ISO}$} to be the set of all problems IPP, NCP, IPP$_k$ and NCP$_k$ for the maximum and mean version.
\begin{prop}\label{PROUNIT}
If $P$ is a problem in the class $\mathcal{ISO}$ which is $NP$-complete in the strong sense for weighted graphs, then it is also $NP$-complete for  unweighted $($simple$)$ graphs.
\end{prop}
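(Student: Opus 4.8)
The plan is to simulate weighted graphs by unweighted ones using a gadget that replaces integer weights with multiplicities, exploiting the fact that $P$ is $NP$-complete \emph{in the strong sense}. Since strong $NP$-completeness means the problem stays hard even when all numerical parameters (the vertex weights $\omega(v)$ and edge weights $c(e)$) are bounded by a polynomial in the input length, I may assume the weighted instance $(G,\omega,c,k,N)$ has all weights written in unary, so the total weight is polynomially bounded. The central idea of the unitarization process is: a vertex $v$ of weight $\omega(v)$ should be replaced by a cluster of $\omega(v)$ unit-weight vertices, and an edge $e=uv$ of weight $c(e)$ should be replaced by $c(e)$ unit-weight parallel connections (or a small gadget realizing this flow) between the corresponding clusters. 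Because the weights are polynomially bounded, the resulting unweighted graph $G'$ has only polynomially many vertices and edges, so the reduction runs in polynomial time.

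First I would make the vertex-cluster replacement so that each cluster of $\omega(v)$ vertices behaves, as a whole, like the single weighted vertex $v$. The key requirement is that in any optimal (sub)partition of $G'$, each cluster is \emph{not split} across parts --- it stays entirely inside one part (or entirely outside, for subpartitions). To enforce this I would make the intra-cluster connections cheap and plentiful (e.g. give each cluster high internal edge-expansion, so that cutting through a cluster incurs a large outgoing flow and can never be optimal), while keeping inter-cluster edges controlled so they faithfully encode $c(e)$. Formally, I would prove a \emph{cluster-integrity lemma}: there is an optimal (sub)partition of $G'$ in which every cluster lies wholly within a single part. Given such a lemma, a (sub)partition of $G'$ respecting cluster integrity corresponds bijectively to a (sub)partition of the weighted $G$, and the normalized outgoing flow $c'(A')/\omega'(A')$ of a union of clusters $A'$ equals exactly $c(A)/\omega(A)$ in $G$, since $\omega'(A')=\sum_{v\in A}\omega(v)=\omega(A)$ and the number of unit edges crossing the cut equals $\sum_{e\in\partial A}c(e)=c(A)$.

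With the cluster-integrity lemma in hand, the equivalence of the two decision problems is immediate in both directions: a yes-instance of $P$ on $(G,\omega,c,k,N)$ yields a yes-instance of the unweighted problem on $(G',k,N)$ by expanding each part into its clusters, and conversely a yes-instance on $G'$ can, by integrity, be pushed down to a (sub)partition of $G$ with the same max/mean normalized flow. Since the threshold $N$ and the value $k$ are preserved verbatim, and this argument is uniform over all four problem types IPP, NCP, IPP$_k$, NCP$_k$ in both the max and mean versions (the normalized flow of each part is preserved identically, so both $\max$ and mean of these quantities are preserved), the reduction works simultaneously for every problem in $\mathcal{ISO}$.

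The main obstacle I anticipate is the cluster-integrity lemma: I must design the internal gadget so that no optimal solution ever gains by slicing a cluster, while simultaneously ensuring the gadget does not introduce spurious cheap cuts that create optimal (sub)partitions with no counterpart in $G$. The delicate point is balancing the internal edge weights against the external ones --- the internal expansion must dominate so that splitting a cluster strictly increases the maximum (or mean) outgoing flow, yet the cluster must not be so tightly bound that it swamps the genuine signal coming from the $c(e)$ edges. For subpartition problems I additionally need to rule out a cluster contributing only part of its vertices as an ``outlier'' fragment, which again follows from making fragmentary cuts costly. Once the right quantitative relation between internal multiplicity and the polynomially bounded external weights is fixed (using that strong $NP$-completeness keeps everything polynomial), the rest of the argument is bookkeeping.
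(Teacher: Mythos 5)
Your overall strategy---exploiting strong $NP$-completeness to write all weights in unary and then simulating them by multiplicities of unit-weight objects---is exactly the paper's unitarization idea, but as written your argument has two genuine gaps, both located at the places you yourself flag as the ``main obstacle.'' The first is the cluster-integrity lemma, which is the entire technical content of the proof and which you never establish: you fix neither the internal gadget (clique? expander? of what size relative to which other quantities?) nor the argument that splitting a cluster can never produce a better or equally good solution. The paper avoids needing such a lemma in your generality by doing the unitarization in two separate steps with explicit exchange arguments: in its Step 1 it first multiplies all vertex weights by a constant $\chi$ so large that every normalized flow becomes at most $1$, then replaces each vertex $u$ by a star of $\chi\omega(u)-1$ pendant unit-weight leaves; integrity of these stars follows from a one-line exchange---moving a single leaf into the part of its center changes a part's flow from $c/w$ to $(c-1)/(w-1)$ or $(c-1)/(w+1)$, which cannot increase it precisely because all flows are at most $1$. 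A clique cluster admits no such clean exchange, so you would instead need a quantitative argument (for instance, that a split cluster of size $\chi\omega(v)$ forces some part to carry at least $\chi\omega(v)-1$ cut edges, hence flow at least $(\chi\omega(v)-1)/\omega'(V(G'))$, which exceeds the rescaled threshold once $\chi$ is a sufficiently large polynomial in the unary input size); such an argument can probably be completed, but it appears nowhere in your proposal.

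The second gap is the edge gadget. In a simple graph you cannot place ``$c(e)$ unit-weight parallel connections'' between the clusters of $u$ and $v$ unless $c(e)\leq \omega(u)\,\omega(v)$; for $\omega(u)=\omega(v)=1$ and $c(e)=5$ your construction simply does not exist. The two available repairs each break a claim your argument relies on. Pre-scaling the vertex weights by $\chi$ (as above) makes room for the parallel edges, but then the threshold becomes $N/\chi$, not ``$N$ preserved verbatim.'' Alternatively, subdividing parallel edges (the paper's choice in its Step 2) restores simplicity, but the subdivision vertices then carry vertex weight, so the flow of a part is no longer exactly $c(A)/\omega(A)$, and your exact-preservation claim---on which the uniform treatment of max and mean, and of all four problem types, rests---collapses; the paper copes with this by giving the original vertices a large weight $\psi$, proving both directions of the equivalence with explicit error terms of size $1/(nL)$, and then invoking its Step 1 again to remove the weights $\psi$. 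So your skeleton matches the paper's, but the two steps you defer are exactly where all the work lies, and one of them is not merely deferred but unrealizable as stated.
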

\begin{proof}
{We prove the proposition for NCP$^M$ and the other cases are similar. Assume that NCP$^M$ is $NP$-complete in the strong sense and let $G=(V,E,\omega,c)$ together with the integer $k\geq 2$ and the number $N=M/L$ be an instance of NCP$^M$, where all the weights and integers $M,L$ are given in unary codes. We apply a {\it unitarization process} on $G$ which is a polynomial reduction to obtain a simple graph $G'$  with all the weights equal to $1$  and a constant $N'$, such that for NCP$^M$, $(G,k,N)$ is a positive instance if and only if $(G',k,N')$ is a positive instance. This implies the $NP$-completeness of NCP$^M$ for unweighted graphs. The process is described in two steps.\\

\noindent{\bf Step 1.} Unitarization of the vertex weights.\\
In this step, we propose a method to make all the vertex weights equal to 1. First, multiply all the vertex weights by a sufficiently large constant $\chi$ such that for every vertex $u\in V$, $\chi \omega(u)\geq \sum_{e=uv\in E} c(e)$. Then, for every $A\subset V$, we have $c(A)/ \omega(A)\leq \chi$. Now, to construct the graph $G'=(V',E',c)$ from $G$, for each vertex $u\in V$, add a set $W_u$ of exactly $\chi \omega(u)-1$ new vertices and join all of the vertices in $W_u$ to $u$ (see Figure~\ref{FIGB}). Also, let the new edges $ux$, $x\in W_u$, have weights equal to 1. We claim that $\tilde{\iota}^M_k(G)=\chi\tilde{\iota}^M_k(G')$. Let $\{A_i\}_1^k$ be a
minimizing partition for $\tilde{\iota}^M_k(G)$. Then, by defining $A'_i:=A_i\cup (\cup_{u\in A_i} W_u)$, it is clear that ${c(A'_i)}/{|A'_i|}=(1/\chi){c(A_i)}/{\omega(A_i)}$. Therefore, $\tilde{\iota}^M_k(G')\leq(1/\chi) \tilde{\iota}^M_k(G)$. To prove the equality, let $\{B'_i\}_1^k$ be a minimizing  partition achieving $\tilde{\iota}^M_k(G')$. For a vertex $u\in V$, assume $u\in B'_i$, for some $i$. If there exists $x\in W_u$, such that $x\in B'_j$, for some $j\neq i$, then we transfer $x$ from $B'_j$ to $B'_i$ and define $B''_i:=B'_i\cup\{x\}$ and $B''_j:=B'_j\backslash \{x\}$. Since $c(B'_j)/|B'_j|\leq \tilde{\iota}^M_k(G')\leq (1/\chi) \tilde{\iota}^M_k(G)\leq 1$, we have
\[
\frac{c(B''_j)}{|B''_j|}=\frac{c(B'_j)-1}{|B'_j|-1}\leq\frac{c(B'_j)}{|B'_j|},\qquad
\frac{c(B''_i)}{|B''_i|}=\frac{c(B'_i)-1}{|B'_i|+1}\leq\frac{c(B'_i)}{|B'_i|}.
\]
By continuing this process, we get a minimizing partition $\{B''_i\}_1^k$ achieving $\tilde{\iota}^M_k(G')$ with the property that for every vertex $u\in V$, $u$ and the vertices in $W_u$ all are in the same part. Thus, by defining $B_i:=V\cap B''_i$, we have $c(B_i)/\omega(B_i)=\chi c(B''_i)/|B''_i|$. Hence, $\tilde{\iota}^M_k(G)\leq \chi \tilde{\iota}^M_k(G')$. It remains to let $N':=N/\chi$.\\

\noindent{\bf Step 2.} Unitarization of the edge weights.\\
Let $n:=|V|$ and assume that all the vertex weights are equal to $1$ and replace every edge $e\in E$ by exactly $c(e)$ multiple edges. Then subdivide all the edges to obtain a simple graph $G'$ and let the new edge weight function $c'$ be the constant function $1$ (see Figure~\ref{FIGB}). For each edge $e\in E$, let the set of new vertices obtained from the subdivisions be denoted by $S_e$ and define $S:=\cup_{e\in E}S_e$. Also, for a constant $\psi$, define the vertex weight function $\omega'$ to be equal to $1$ on the set $S$ and equal to $\psi$ on the set $V$.
\begin{figure}[ht]
{\def\emline#1#2#3#4#5#6{%
\put(#1,#2){\special{em:moveto}}%
\put(#4,#5){\special{em:lineto}}}
\def\newpic#1{}
\unitlength .4pt
\special{em:linewidth 0.4pt}
\linethickness{0.4pt}
\begin{picture}(300,180)(40,-270)
%Vertices
%Edge No. 2
\bezier{500}(382, -155)(395, -205)(408, -255)
\put(395,-189){\makebox(0, 0)[cc]{}}
%Edge No. 4
\bezier{500}(382, -155)(422, -205)(462, -255)
\put(422,-189){\makebox(0, 0)[cc]{}}
%Edge No. 5
\bezier{500}(382, -155)(368, -205)(355, -255)
\put(368,-189){\makebox(0, 0)[cc]{}}
%Edge No. 6
\bezier{500}(382, -155)(342, -205)(302, -255)
\put(342,-189){\makebox(0, 0)[cc]{}}
%Edge No. 7
\bezier{500}(382, -155)(422, -121)(462, -87)
\put(422,-105){\makebox(0, 0)[cc]{}}
%Edge No. 8
\bezier{500}(382, -155)(342, -121)(302, -87)
\put(342,-105){\makebox(0, 0)[cc]{}}
%Edge No. 9
\bezier{500}(382, -155)(382, -121)(382, -87)
\put(382,-105){\makebox(0, 0)[cc]{}}
%Vertex No. 3
\put(382,-155){\circle*{20}}
\put(362,-155){\makebox(0, 0)[cc]{$u$}}
\put(407,-155){\makebox(0, 0)[cc]{1}}
%Vertex No. 4
\put(462,-255){\circle*{20}}
\put(462,-275){\makebox(0, 0)[cc]{1}}
%Vertex No. 6
\put(408,-255){\circle*{20}}
\put(408,-275){\makebox(0, 0)[cc]{1}}
%Vertex No. 7
\put(355,-255){\circle*{20}}
\put(355,-275){\makebox(0, 0)[cc]{1}}
%Vertex No. 8
\put(302,-255){\circle*{20}}
\put(302,-275){\makebox(0, 0)[cc]{1}}
\put(435,-255){\makebox(0, 0)[cc]{$\ldots$}}
%
%Vertex No. 0
\put(121,-155){\circle*{20}}
\put(101,-155){\makebox(0, 0)[cc]{$u$}}
\put(121,-180){\makebox(0, 0)[cc]{$\omega(u)$}}
\put(260,-155){\makebox(0, 0)[cc]{$\longrightarrow$}}
%Edge No. 0
\bezier{500}(121, -155)(91, -121)(51, -87)
\put(41,-105){\makebox(0, 0)[cc]{}}
%Edge No. 1
\bezier{500}(121, -155)(121, -121)(121, -87)
\put(81,-105){\makebox(0, 0)[cc]{}}
%Edge No. 3
\bezier{500}(121, -155)(171, -121)(211, -87)
\put(101,-105){\makebox(0, 0)[cc]{}}
\end{picture}
\begin{picture}(300,185)(-170,-405)
%Vertices
%Edge No. 1
\bezier{500}(165, -292)(78, -292)(21, -292)
\put(93,-266){\makebox(0, 0)[cc]{c(e)}}
%Vertex No. 3
\put(21,-292){\circle*{20}}
\put(21,-266){\makebox(0, 0)[cc]{1}}
\put(220,-292){\makebox(0, 0)[cc]{$\longrightarrow$}}
%Vertex No. 2
\put(165,-292){\circle*{20}}
\put(165,-266){\makebox(0, 0)[cc]{1}}
%
%Vertex No. 0
\put(292,-292){\circle*{20}}
\put(292,-266){\makebox(0, 0)[cc]{$\psi$}}
%Vertex No. 1
\put(498,-292){\circle*{20}}
\put(498,-266){\makebox(0, 0)[cc]{$\psi$}}
%Edge No. 0
\bezier{500}(292, -292)(395, -175)(498, -292)
\bezier{500}(292, -292)(395, -253)(498, -292)
\bezier{500}(292, -292)(395, -331)(498, -292)
\bezier{500}(292, -292)(395, -409)(498, -292)
\put(395,-235){\circle*{20}}
\put(380,-221){\makebox(0, 0)[cc]{1}}
\put(395,-273){\circle*{20}}
\put(380,-258){\makebox(0, 0)[cc]{1}}
\put(395,-311){\circle*{20}}
\put(380,-296){\makebox(0, 0)[cc]{1}}
\put(395,-349){\circle*{20}}
\put(380,-334){\makebox(0, 0)[cc]{1}}
\put(393,-335){$\vdots$}
\end{picture}}
\caption{The vertex and edge gadgets used in the unitarization process.\label{FIGB}}
\end{figure}

We claim that if $\psi$ is sufficiently larger than $n,L$ and $|S|$, then $\tilde{\iota}^M_k(G)\leq N$ if and only if $\tilde{\iota}^M_k(G')\leq (N/\psi)$. For this, first assume that $\tilde{\iota}^M_k(G)\leq N$  and let $\{A_i\}_1^k$ be a minimizing $k$-partition for $\tilde{\iota}^M_k(G)$. Define $A'_i:=A_i\cup (\displaystyle\cup_{e\in E(A_i,A_j), 1\leq j\leq i} S_e)$. It is clear that ${c'(A'_i)}={c(A_i)}$ and ${\omega'(A'_i)}\geq\psi{|A_i|}$. Therefore, $\tilde{\iota}^M_k(G')\leq(N/\psi)$. On the other hand, assume that $\tilde{\iota}^M_k(G')\leq(N/\psi)$ and let $\{B'_i\}_1^k$ be a minimizing  partition achieving $\tilde{\iota}^M_k(G')$. By defining $B_i:= B'_i\cap V$, we have $c(B_i)\leq c'(B'_i)$. Also, if $\psi$ is sufficiently larger than $n,L$ and $|S|$, then
\[\frac{c(B_i)}{|B_i|}<\frac{\psi c(B_i)}{\psi |B_i|+|S|}+\frac{1}{nL}\leq\frac{\psi c'(B'_i)}{\omega'(B'_i)}+\frac{1}{nL}.\]
Thus,
\[\max_{1\leq i\leq k}\left\{\frac{c(B_i)}{|B_i|}\right\}=\frac{c(B_{i_0})}{|B_{i_0}|}<
\psi\ \tilde{\iota}^M_k(G')+\frac{1}{nL}\leq \frac{M}{L}+\frac{1}{nL} \leq \frac{M}{L}+\frac{1}{L|B_{i_0}|}.\]
And consequently,
$\tilde{\iota}^M_k(G)\leq\max_{i}({c(B_i)}/{|B_i|})\leq {M}/{L}=N.$
This completes the second step.

Finally, by repeating Step 1, we may find a simple graph all of whose edge and vertex weights are equal to 1.
Note that since the edge and vertex weights of $G$ and also $M,L$ are given in unary codes, the obtained simple graph is polynomial time computable.
}\end{proof}
By Theorem \ref{THMCMPLX}~(i), we know that NCP$_2$ is $NP$-complete for graphs with multiple edges. Thus, NCP$_2$ is $NP$-complete in the strong sense for weighted graphs. The following corollary is deduced from this fact along with Theorem~\ref{THMNCPM} and Proposition~\ref{PROUNIT}. Part~(i) can be seen as a generalization of Mohar's result (Theorem~\ref{THMCMPLX}~(i)).
\begin{cor}\\
{\rm (i)} For every fixed $k\geq2$, {\rm IPP}$_k$ and {\rm NCP}$_k$ $($in both max and mean versions$)$ are $NP$-complete for unweighted $($simple$)$ graphs.\\
{\rm (ii)} The problem {\rm NCP}$^M$ is $NP$-complete for unweighted trees.
\end{cor}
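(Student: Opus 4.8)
The plan is to assemble both parts from the strong $NP$-completeness results already in hand together with the unitarization machinery of Proposition~\ref{PROUNIT}, being careful about which hypotheses (strong $NP$-completeness, and in the tree case unit edge weights) each building block actually supplies. Note that membership in $NP$ is clear in every case, a $k$-(sub)partition serving as a polynomial certificate, so only hardness needs work.

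For part~(i) I would first upgrade Theorem~\ref{THMCMPLX}(i) to a statement about weighted \emph{simple} graphs. Mohar's reduction produces multigraphs whose edge multiplicities are bounded by a polynomial in the input size; encoding each multiplicity as an integer, unary-coded edge weight turns the multigraph into a weighted simple graph, so NCP$_2$ — and hence, since $\iota_2=\tilde\iota_2$ in both the max and the mean version, also IPP$_2$ — is $NP$-complete \emph{in the strong sense} for weighted graphs. As NCP$_2$, IPP$_2\in\mathcal{ISO}$, Proposition~\ref{PROUNIT} yields their $NP$-completeness for unweighted graphs. To pass from $k=2$ to an arbitrary fixed $k\geq 2$ I would reuse the padding reduction from the proof of Theorem~\ref{THMCGG}: adjoining $k-2$ isolated vertices of weight $1$ reduces IPP$_2$/NCP$_2$ to IPP$_k$/NCP$_k$. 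The essential point is that these vertices carry weight $1$ and, as singletons, contribute normalized flow $0$, so the construction sends unweighted graphs to unweighted graphs and preserves the threshold in the max case and up to the harmless rescaling $N\mapsto(2/k)N$ in the mean case. This gives part~(i) for every fixed $k\geq 2$. (One cannot instead feed Theorem~\ref{THMCGG} directly into Proposition~\ref{PROUNIT}, since its reduction is from PARTITION and so is only weakly $NP$-complete; the strong hardness must come from Mohar's multigraph result.)

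For part~(ii) the naive route — applying Proposition~\ref{PROUNIT} to the strongly $NP$-complete problem NCP$^M$ on weighted trees (Theorem~\ref{THMNCPM}) — fails, and this is the main obstacle: Step~2 of the unitarization replaces each weighted edge by parallel edges and subdivides them, creating cycles and destroying the tree structure. The way around it is to observe that the weighted trees produced by Theorem~\ref{THMNCPM} already have \emph{all edge weights equal to $1$}, with only the vertex weights nontrivial (and polynomially bounded, since 3-PARTITION is strongly $NP$-complete). Hence only Step~1 of the unitarization is needed — attaching to each vertex $u$ a bundle $W_u$ of $\chi\omega(u)-1$ pendant leaves — and this operation turns a tree into a tree. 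Specializing the proof of Proposition~\ref{PROUNIT} to Step~1 gives $\tilde\iota^M_k(G)=\chi\,\tilde\iota^M_k(G')$ with $N':=N/\chi$, so NCP$^M$ stays $NP$-complete on the resulting unweighted trees.

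I expect the only genuinely delicate checks to be (a) confirming that Mohar's multiplicities, and the vertex weights constructed in Theorem~\ref{THMNCPM}, are polynomially bounded, so that strong $NP$-completeness is legitimately available as input to Proposition~\ref{PROUNIT}; and (b) verifying that restricting the unitarization to its vertex-weight step still yields $\tilde\iota^M_k(G)=\chi\,\tilde\iota^M_k(G')$ when the edge weights are already unit, which is a direct specialization of the argument already carried out for Proposition~\ref{PROUNIT}. Everything else is bookkeeping.
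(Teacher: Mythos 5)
Your proof is correct and follows essentially the same route as the paper: reinterpret Mohar's multigraph result as strong $NP$-completeness of NCP$_2$ (hence IPP$_2$) for weighted graphs, pass from $2$ to fixed $k$ via the isolated-vertex padding from the proof of Theorem~\ref{THMCGG}, apply Proposition~\ref{PROUNIT} for part~(i), and for part~(ii) combine Theorem~\ref{THMNCPM} with only the vertex-weight step of the unitarization, which preserves trees since that construction already has unit edge weights. Your write-up is in fact more explicit than the paper's two-line derivation about exactly these points (the order of padding versus unitarization, why Theorem~\ref{THMCGG} alone cannot supply the strong hardness, and why Step~2 of the unitarization must be avoided on trees), but the underlying argument is the same.
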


\section{Concluding Remarks}
Our results show that the study of isoperimetric numbers and minimum normalized cuts on weighted trees is not only  important because of its wide range of applications, but also the scope of weighted trees provide a very interesting arena to test the computational complexity of these problems in which
these isoperimetric problems change their computational behavior by a very slight perturbation of conditions. This fact, on the one hand, is quite interesting from a complexity theoretic point of view, where one is quite interested to investigate problems close to the borders of the classes $P$  and $NP$-complete, and on the other hand, is also interesting from the point of view of approximation algorithms for applications.
In this regard, according to our results, {\it intuitively}, passing from taking maximums to the mean or restricting
the space of subpartitions to partitions will generally make the problem computationally harder. These observations provide enough evidence for the fact that the study of the following open problems ought to be interesting.
\begin{itemize}
\item{Does there exist a polynomial time algorithm that given the number $k \geq 2$ and a weighted tree $T$, computes the parameter $\iota^M_k(T)$? }
\item{Given a constant number $k \geq 2$, does there exist a polynomial time algorithm that computes the parameter $\tilde{\iota}^m_k$ for weigthed trees?}
\item{Determine the computational complexity of IPP$^m$ and NCP$^m$ for unweighted trees.}
\item{Determine the computational complexity of IPP$_k$ and NCP$_k$ for bipartite planar unweighted graphs.}
\end{itemize}
Also, one may consider a number of different variants of isoperimetric problems on graphs and study their computational properties.
As a couple of these variants one may propose the following setups.

Firstly, considering  the maximum and mean versions of the introduced parameters as $\|.\|_{_\infty}$ and $\|.\|_{_1}$ counterparts of the isoperimetric problem, respectively, it is interesting to study the $\|.\|_{_p}$ versions of these parameters and the computational complexity of the corresponding problems. In this setting, it is important to try to characterize the properties that are responsible for the change of hardness
from $NP$-completeness of IPP$^m$ to the polynomial solvability of IPP$^M$ in the limit.

Secondly, one may consider the supervised version of the partitioning problems and formulate them as  multiterminal isoperimetric problems,
in which given a weighted graph along with $k$ specified vertices $v_{1}, \ldots, v_{k}$, we look for a $k$-subpartition ($k$-partition)
such that $v_i$'s appear in different parts and the corresponding cost functions (see Definition~\ref{DEFISO}) are minimized.
For instance, using a similar argument as in the proof of Theorem~\ref{THMCGG}, one may prove that for any $k \geq 2$ the multiterminal versions of IPP$_k$ and NCP$_k$ are $NP$-complete for weighted trees\cite{JAV10}.

As another variant of these problems, one may focus on the approach through $(k,b)$-subpartitions (see \cite{NAIB08,NK07}) that can be considered as
a combination of the max and the mean approach and follow the same line of study.

%\vspace*{.5cm}\ \\
% {\bf Acknowledgment} \ \\ \ \\
%The authors wish to express their sincere thanks to  \\ \\

%\bibliographystyle{alpha}
%\bibliographystyle{siam}
%\bibliography{bib_isoperimetry}

\end{document}